\newcommand{\jd}[1]{\color{black}{#1}}
\newcommand{\agents}{A}
\newcommand{\outc}{R}
\newcommand{\outrel}{\delta}
\newcommand{\vx}{\vec{x}}
\newcommand{\next}{{\cal X}}
\newcommand{\trf}[1]{\langle #1 \rangle}
\newcommand{\N}{{\cal N}}
\newcommand{\je}[1]{\textcolor{black}{#1}}
\def\by#1{\mathop{{\hbox{\setbox0=\hbox{$\scriptstyle{#1\quad}$}{$\buildrel{\quad\scriptstyle{#1}\quad}\over{\hbox to \wd0{\rightarrowfill}}$}}}}}
\def\bytwo#1#2{%
\setbox0=\hbox{$\scriptstyle{#1\quad}$}%
\mathrel{\mathop{\hbox to \wd0{\rightarrowfill}}\limits^{#1}_{#2}}%
}
\begin{document}

\pagestyle{headings}  

\author{Javier Esparza\inst{1} \and J\"org Desel\inst{2}} 

\title{On Negotiation as Concurrency Primitive}
\institute{Fakult\"at f\"{u}r Informatik, Technische Universit\"{a}t M\"{u}nchen, Germany \and
Fakult\"at f\"ur Mathematik und Informatik, FernUniversit\"at in Hagen, Germany}
\maketitle

\begin{abstract}
We introduce negotiations, a model of concurrency close to Petri nets, with multiparty
negotiation as primitive. We study the problems {\jd of soundness of negotiations and} of, given a negotiation with possibly many steps, 
computing a {\em summary}, i.e., an equivalent one-step negotiation. We provide a complete set 
of reduction rules for \jd{sound}, \je{acyclic, weakly deterministic negotiations  and show that, for  
deterministic negotiations, the rules compute
the summary in polynomial time}.

\end{abstract}

\section{Introduction}

Many modern distributed systems consist of components whose behavior is 
only partially known. Typical examples include open systems where programs 
(e.g. Java applets) can enter or leave, multi-agent systems, business processes,
or protocols for conducting elections and 
auctions. 

An interaction between {\jd a fixed set of components with not 
fully known behavior} can be abstractly described as a {\em negotiation} in which several
{\em parties} (the components {\jd involved in the negotiation}) nondeterministically agree on an {\em outcome}, 
which results in a transformation of  internal states of the parties.
A more technical but less suggestive term
would be a {\em synchronized nondeterministic choice} and, as the name suggests,
these interactions can be modelled in any standard process algebra
as a combination of parallel composition and nondeterministic 
choice, or as small Petri nets. We argue that much can be gained by studying 
formal models with {\em negotiation atoms} as 
concurrency primitive. In 
particular, we show that the negotiation point of view reveals new classes of systems 
with polynomial analysis algorithms.

Negotiation atoms can be combined into {\em distributed negotiations}. For instance, a distributed negotiation between a buyer, a seller, and a broker,
consists of one or more rounds of atoms  involving the buyer and 
the broker or the seller and the broker, followed by a final atom 
between the buyer and the seller. 
We introduce a formal model for distributed negotiations,
close to a colored version of van der {\jd Aalst's} {\em workflow nets} \cite{aalst},
and investigate two {\jd important} analysis problems.
First, just like workflow nets, distributed negotiations can be {\em unsound} 
because of deadlocks 
or livelocks (states from which no deadlock is reached, but the negotiation 
cannot be completed). The {\em soundness} problem consists of deciding if
a given negotiation is sound. Second, a sound negotiation
is equivalent to a  negotiation with only one atom whose 
{\jd state transformation function} 
determines the possible final internal states of all parties
as a function of their initial internal states. We call this  negotiation a 
{\em summary}. The {\em summarization problem} consists of computing 
a summary of a distributed negotiation. 
Both problems will shown to be  PSPACE-hard for arbitrary negotiations, and NP-hard for acyclic ones. 
They can be solved by means of well-known algorithms based on the exhaustive 
exploration of the state space. However, this approach badly suffers from the state-explosion problem: 
even the analysis of distributed negotiations with a very simple structure requires exponential  time. 

In this paper we suggest {\em reduction} algorithms that 
avoid the construction of the state space but  exhaustively 
apply syntactic reduction rules 
that simplify the system while preserving some aspects of the behavior,
like absence of deadlocks. This approach has been extensively applied to 
Petri nets or workflow nets, but most of this work has been devoted to the 
liveness or soundness problems \cite{DBLP:conf/ac/Berthelot86,DBLP:conf/apn/Haddad88,DBLP:journals/ppl/HaddadP06,DBLP:conf/caise/DongenAV05,DBLP:journals/jcss/VerbeekWAH10}. 
For these problems many reduction rules are known, and some sets of rules have been proved
{\em complete} for certain classes of systems \cite{DBLP:journals/tcs/GenrichT84,DBLP:conf/concur/Desel90,Desel:1995:FCP:207572}, meaning that they reduce all live or sound systems in the class, and only those, to a trivial system (in our case to a single negotiation atom). However, many 
of these rules, like the linear dependency rule of \cite{Desel:1995:FCP:207572}, cannot be applied to  the summarization problem, because they  preserve  only the soundness property.
 
We present a complete set of reduction rules for the summarization problem of {\em acyclic} negotiations that are either {\em deterministic} or {\em weakly deterministic}. The rules are inspired by reduction rules used to transform finite automata into regular expressions by eliminating states \cite{HUM}. In deterministic negotiations all involved agents are deterministic, meaning that they are never ready to engage in more than one 
negotiation atom. Intuitively, 
nondeterministic agents may be ready to engage in multiple atoms, and which one takes place 
is decided by the deterministic parties, which play thus the role of negotiation
leaders.  In weakly deterministic negotiations not every agent is deterministic, but some deterministic party is involved in every negotiation atom an agent can engage in next. 

For {\em deterministic} negotiations we prove that 
a sound and acyclic negotiation can be summarized by means of a polynomial number of application of the rules, leading to a polynomial algorithm.

The paper is organized as follows. Section \ref{sec:synsem} introduces the syntax and semantics of the model. 
Section \ref{sec:problems} introduces the
soundness and summarization problems. Section \ref{sec:redrules} presents our reduction rules. Section \ref{sec:determ} defines (weakly) deterministic   negotiations.
Section~\ref{sec:compcomp} proves the completeness and polynomial complexity results announced above. 
Finally, Section \ref{sec:conc} presents some conclusions, open questions and related work.

This paper is an extended version of the conference paper \cite{Concur2013}.

\section{Negotiations: Syntax and Semantics}
\label{sec:synsem}

We fix a finite set $\agents$ of {\em agents} representing potential parties of negotiations.
Each agent $a \in \agents$ has a (possibly infinite) nonempty set $Q_a$ of {\em internal states}. 
We denote by $Q_\agents$ the cartesian product $\prod_{a \in \agents} Q_a$. 
{\jd A {\em transformer} is a left-total relation $\tau \subseteq Q_\agents \times Q_\agents$}, representing a nondeterministic state transforming function.
{\jd Given $S \subseteq \agents$,
we say that a transformer $\tau$ is an {\em $S$-transformer} if, for each $a_i \notin S$,  
$\left((q_{a_1} , \ldots , q_{a_i}, \ldots , q_{a_{|A|}}), (q'_{a_1} , \ldots ,q'_{a_i}, \ldots ,  q'_{a_{|A|}})\right)\in\tau$ implies $q_{a_i} = q'_{a_i}$.}
So an $S$-transformer  only transforms the internal
states of  agents in $S$. 

\begin{definition}
A {\em negotiation atom}, or just an {\em atom}, is a triple $n=(P_n, \outc_n,\outrel_n)$,
 where $P_n \subseteq \agents$ is a nonempty set of {\em parties}, 
$\outc_n$ is a finite, nonempty set of {\em outcomes}, {\jd and 
 $\outrel_n$  is a mapping assigning to each outcome $r$} in $\outc_n$ 
a $P_n$-transformer $\outrel_n (r)$.
We denote the transformer 
$\outrel_n (r)$ by $\trf{n,r}$, and, if there is no confusion, by $\trf{r}$.
\end{definition}

\noindent Intuitively, if the  
states of the agents before a negotiation $n$ are given by a tuple $q$ 
and the outcome of the negotiation is $r$, then the agents  change
their  states to $q'$
for some $(q,q') \in \trf{n,r}$. 
Only the parties of $n$ can change their internal states. Each outcome $r \in \outc_n$ is possible, independent from the previous internal states of the parties.

For a simple example, consider a negotiation atom $n_\texttt{FD}$ with 
parties \texttt{F} (Father) and \texttt{D} (teenage Daughter). The goal of the
negotiation is to determine whether \texttt{D} can go to a party, and the time 
at which she must return home. The possible outcomes are
$\{\texttt{yes}, \texttt{no}, \texttt{ask\_mother}\}$.
Both sets $Q_{\texttt{F}}$ and $Q_{\texttt{D}}$ contain a state 
{\it angry} plus a state $t$ for every time $T_1 \leq t \leq T_2$ in a 
given interval $[T_1,T_2]$. {\jd Initially, \texttt{F} is in state $t_f$ and \texttt{D} in state $t_d$.} 
The  transformer $\delta_{n_\texttt{FD}}$ is given by

$$
\begin{array}{rcl}
\trf{\texttt{yes}}  & = & \left\{ \left( (t_f, t_d)  , (t,t)\right) \; \mid \;  t_f \leq t \leq t_d \vee t_d \leq t \leq t_f \right\} \\
\trf{\texttt{no}} & = &  \left\{\left((t_f, t_d) ,  ({\it angry}, {\it angry})\right) \; \right\}  \\
\trf{\texttt{ask\_mother}}& = & \left\{ \left( (t_f, t_d) ,  (t_f, t_d) \right) \right\} 
\end{array} 
$$
\noindent That is, if the outcome is \texttt{yes}, then \texttt{F} and \texttt{D} agree on a time $t$ {\jd which is not earlier and not later than both suggested times}.
If it is \texttt{no}, then there is a quarrel and both parties get angry. If the
outcome is \texttt{ask\_mother}, then the parties keep their previous times. 

\subsection{Combining negotiation atoms}

If the outcome of the atom above is \texttt{ask\_mother}, then $n_\texttt{FD}$
should be followed by a second atom $n_\texttt{DM}$ between \texttt{D} and \texttt{M} 
(Mother). The complete negotiation is the composition of $n_\texttt{FD}$ and $n_\texttt{DM}$, 
where the possible internal states of \texttt{M} are the same as those of \texttt{F} and \texttt{D}, and 
$n_\texttt{DM}$ is  a 
copy of $n_\texttt{FD}$, but without the \texttt{ask\_mother} outcome. In 
order to compose atoms, we add a {\em transition function} $\next$ that assigns to every 
triple $(n,a,r)$ consisting of an atom $n$, a party $a$ of $n$, and an outcome $r$ of $n$ a set 
$\next(n,a,r)$ of atoms. Intuitively, this is the set of atoms  
agent $a$ is ready to engage in after the atom $n$,  if the outcome of $n$ is $r$. 

\begin{definition}
Given a {\jd finite} set of atoms $N$, let  $T(N)$ denote the set of triples $(n, a, r)$ such that 
$n \in N$, $a\in P_n$, and $r \in \outc_n$. 
A {\em negotiation} is a tuple ${\cal N}=(N, n_0, n_f, \next)$, where 
$n_0, n_f \in N$ are the {\em initial} and {\em final} atoms, and  $\next \colon T(N) \rightarrow 2^N$ is the {\em transition function}. Further, ${\cal N}$ satisfies the following properties: 

(1) every agent of $\agents$ participates in both $n_0$ and $n_f$; 

(2) for every $(n, a, r) \in T(N)$: $\next(n, a, r)= \emptyset$ if{}f $n=n_f$.

\end{definition}

{\jd We may have $n_0=n_f$. Notice that $n_f$ has, as all other atoms, at least one outcome $r \in \outc_{n_f}$.} 

\begin{figure}[t]
\centerline{\scalebox{0.35}{\input{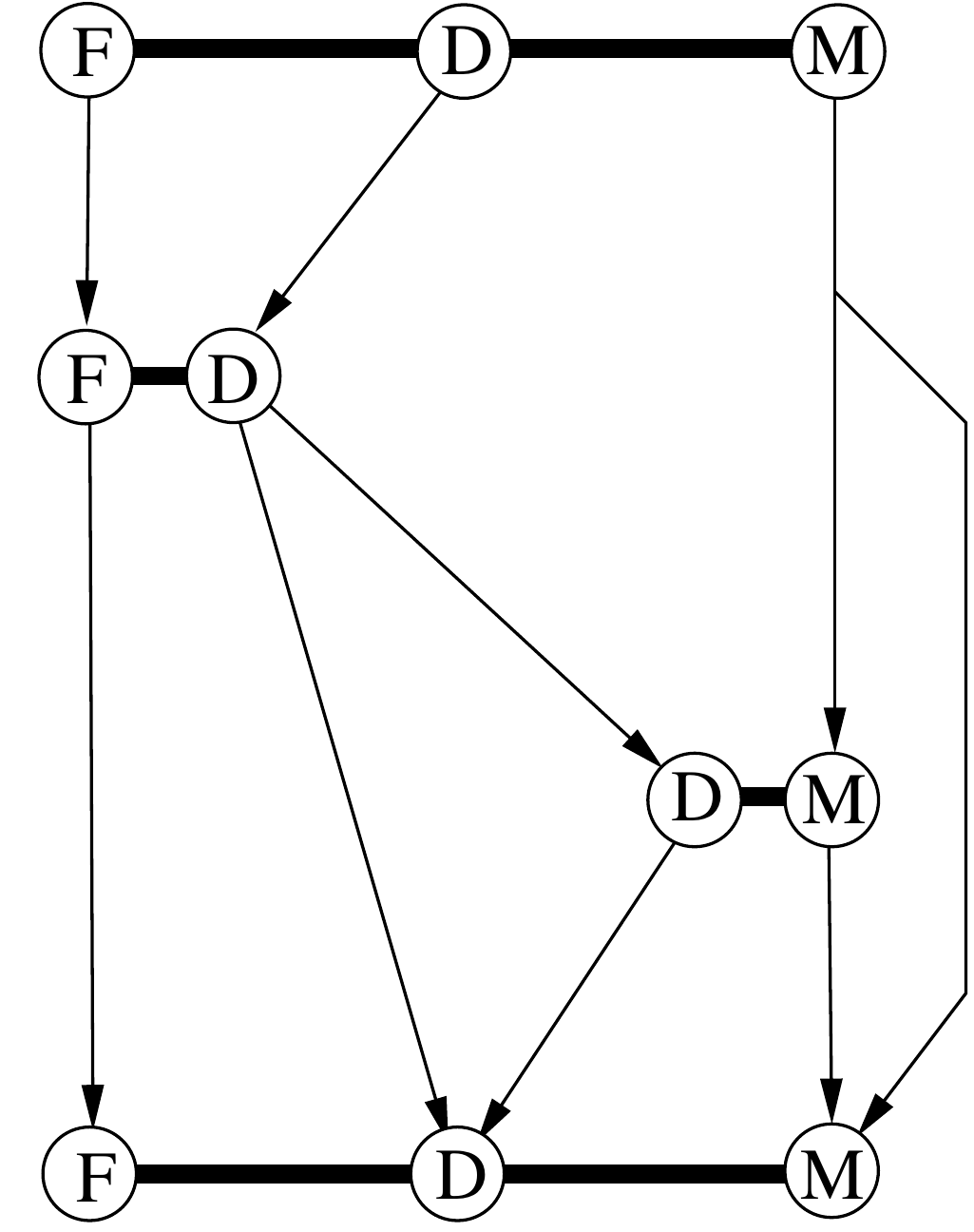_t}} \qquad \scalebox{0.32}{\input{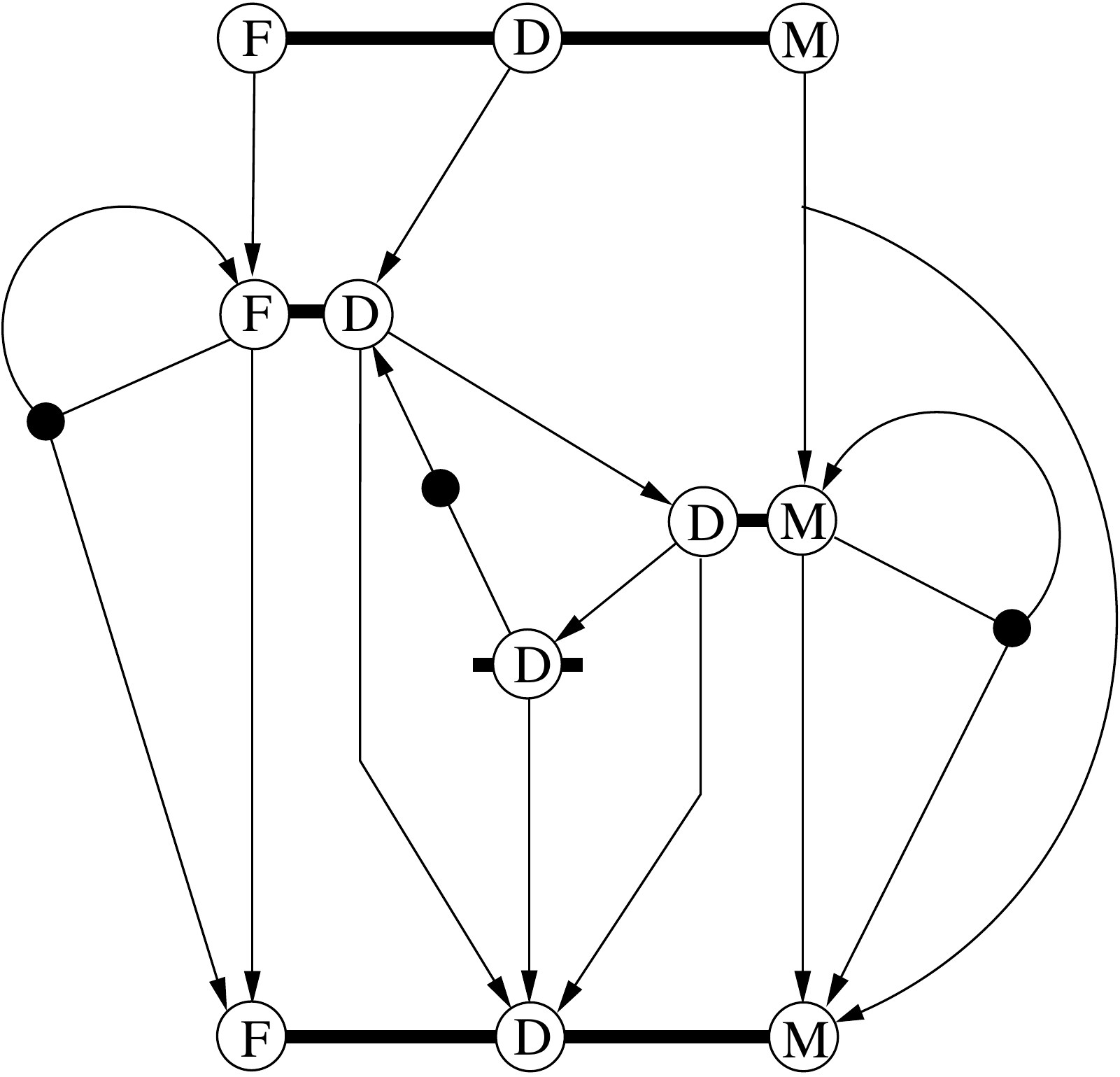_t}}}
\caption{An acyclic negotiation and the ping-pong negotiation.}
\label{fig:dneg}
\end{figure}
Negotiations are graphically represented as shown in Figure \ref{fig:dneg}. For each atom $n \in N$ we draw a black bar; for each party $a$ of $P_n$ we draw a white circle on the bar, called a {\em port}. For each $(n,a,r) \in T(N)$, we draw a hyperarc leading from the port of $a$ in $n$ to all the ports of $a$ in the atoms of $\next(n,a,r)$, and label it by $r$. Figure \ref{fig:dneg} shows on the left the graphical representation of the Father-Daughter-Mother 
negotiation sketched above.
Instead of multiple (hyper)arcs connecting the same input port to the same output ports we draw a single (hyper)arc with multiple labels. In the figure, we write \texttt{y} for \texttt{yes}, \texttt{n} for \texttt{no}, and \texttt{am} for \texttt{ask}\underline{ }\texttt{mother}. \texttt{st} stands for \texttt{start}, the only outcome of  $n_0$.
Since $n_f$ has no outgoing arc, the outcomes of $n_f$ do not appear in the graphical representation.

\begin{definition}
The {\em graph associated to a negotiation ${\cal N}=(N, n_0, n_f, \next)$}
is the directed graph with vertices $N$ and edges $\{(n,n')\in N \times N \mid \exists \, (n,a,r) \in T(N) \colon n' \in \next(n,a,r)\} $. The negotiation ${\cal N}$ is {\em acyclic} if its graph has no cycles.
\end{definition}  
 
The negotiation on the left of Figure \ref{fig:dneg} is acyclic. The negotiation on the right
(ignore the black dots on the arcs for the moment) is the ping-pong negotiation, well-known in every family.
The $n_\texttt{DM}$ atom has now an extra outcome \texttt{ask\_father} (\texttt{af}), and Daughter
\texttt{D} can be sent back and forth between Mother and Father. After each round, \texttt{D} 
``negotiates with herself'' (atom $n_\texttt{D}$) with possible outcomes \texttt{c} (\texttt{continue}) and \texttt{gu} (\texttt{give up}). This negotiation is cyclic because, for instance, we have 
$\next(n_\texttt{FD}, \texttt{D}, \texttt{am}) = \{n_\texttt{DM}\}$, $\next(n_\texttt{DM}, \texttt{D}, \texttt{af}) = \{n_\texttt{D}\}$, and $\next(n_\texttt{D}, \texttt{D}, \texttt{c}) = \{n_\texttt{FD}\}$.

\subsection{Semantics}
A {\em marking} of a negotiation ${\cal N}=(N, n_0, n_f, \next)$ is a mapping 
$\vx \colon \agents \rightarrow 2^N$. Intuitively, $\vx(a)$ is the set of atoms that agent $a$ is currently ready to engage in next. 
The {\em initial} and {\em final} markings, 
denoted by $\vx_0$ and $\vx_f$ respectively, are given by $\vx_0(a)=\{n_0\}$ and $\vx_f(a)=\emptyset$ for every $a \in \agents$.

A marking $\vx$ {\em enables} an atom $n$ if $n \in \vx(a)$ for every $a \in P_n$,
i.e., if every agent that parties in $n$ is currently ready to engage in it.
If $\vx$ enables $n$, then $n$ can take place and its parties
agree on an outcome $r$; we say that $(n,r)$ {\em occurs}.
The occurrence of $(n,r)$ produces a next marking $\vx'$ given by $\vx'(a) = \next(n,a,r)$ for every $a \in P_n$, 
and $\vx'(a)=\vx(a)$ for every $a \in \agents \setminus P_n$. 
We write $\vx \by{(n,r)} \vx'$ to denote this,
and call it a {\em small step}. 

{\jd By this definition, $\vx (a)$ is always either $\{n_0\}$ or equals $\next(n,a,r)$ for some atom $n$ and outcome $r$. The marking $\vx_f$ can only be reached by the occurrence of $(n_f, r)$ ($r$ being a possible outcome of $n_f$), and it does not enable any atom. Any other marking that does not enable any atom is considered a {\em deadlock}.}

Reachable markings can be graphically represented by placing tokens {\jd (black dots)} on the forking points of the hyperarcs (or in the middle of an arc). {\jd Thus, both the initial marking and the final marking are represented by no tokens, and all other reachable markings are represented by exactly one token per agent.} 

Figure \ref{fig:dneg} shows on the right
the marking in which Father (\texttt{F}) is ready to engage in the atoms $n_\texttt{FD}$ and  $n_f$, 
Daughter (\texttt{D}) is only ready to engage in $n_\texttt{FD}$, and Mother (\texttt{M}) is ready to engage in both $n_\texttt{DM}$ and $n_f$. 

We write $\vx_1 \by{\sigma}$ to denote that there is a sequence 
$$\vx_1 \by{(n_1,r_1)} \vx_2 \by{(n_2,r_2)}\cdots \by{(n_{k-1},r_{k-1})} \vx_{k} \by{(n_k,r_k)} \vx_{k+1} \cdots$$ 
of small steps such that  $\sigma = (n_1, r_1) \ldots (n_{k}, r_{k}) \ldots$. If
$\vx_1 \by{\sigma}$, then $\sigma$ is an {\em occurrence sequence} from the marking $\vx_1$, and $\vx_1$ enables $\sigma$.
If $\sigma$ is finite, then we write
$\vx_1 \by{\sigma} \vx_{k+1}$ and say that $\vx_{k+1}$ is {\em reachable} from $\vx_1$. 
If  $\vx_1$ is the initial marking then we call $\sigma$ {\em initial occurrence sequence}. If moreover $\vx_{k+1}$ is the final marking, then $\sigma$ is a {\em large step}.

\section{Analysis Problems}
\label{sec:problems}

Correct negotiations should be deadlock-free and, in principle,  
they should not have infinite occurrence sequences either. However,
requiring the latter in our negotiation model is too strong, because
infinite occurrence sequences may be excluded by fairness constraints. 
Following \cite{aalst,DBLP:journals/fac/AalstHHSVVW11}, we introduce a notion of partial correctness independent of termination: 

\begin{definition}
A negotiation is {\em sound} if {\em (a)} every atom is enabled at some reachable marking, and {\em (b)} every occurrence sequence from the initial marking is either a large step or can be extended to a large step. 
\end{definition}

The negotiations of Figure \ref{fig:dneg} are sound. However, if we set
$\next(n_0,\texttt{M}, \texttt{st})= \{n_\texttt{DM}\}$ instead of $\next(n_0,\texttt{M}, \texttt{st})= \{n_\texttt{DM}, n_f\}$, then the occurrence sequence $(n_0,\texttt{st}) (n_\texttt{FD}, \texttt{yes})$
leads to a deadlock.

The {\em final outcomes} of a negotiation are the outcomes of its final atom.
Intuitively, two sound negotiations over the same agents 
are equivalent if they have the same final outcomes, and for each final outcome
they transform the same initial states into the same final states.

\begin{definition}
Given a negotiation ${\cal N}=(N,n_0,n_f,\next)$, we attach to each outcome $r$ of $n_f$ a 
{\em summary transformer} $\trf{{\cal N},r}$ as follows. Let $E_r$ be the set of large steps 
of ${\cal N}$ that end with $(n_f,r)$. We define $\trf{{\cal N},r} = \bigcup_{\sigma \in E_r} \trf{\sigma}$, where for $\sigma = (n_1, r_1)\, (n_2, r_2)
\ldots (n_k, r_k)$ we define $\trf{\sigma}= \trf{n_1, r_1} \, \trf{n_2, r_2}
 \cdots \trf{n_k, r_k}$ {\jd (remember that each $\trf{n_i, r_i}$ is a relation on $Q_A$;  concatenation is the usual concatenation of relations)}.
\end{definition}

$\trf{{\cal N},r}(q_0)$ is the set of possible final states of the agents after the 
negotiation concludes with outcome $r$, if their initial states are given by $q_0$. 

\begin{definition}
Two negotiations ${\cal N}_1$ and ${\cal N}_2$ over the same set of agents are {\em equivalent},
denoted by ${\cal N}_1 \equiv {\cal N}_2$, if they are either both unsound, or if they are both sound, 
have the same final outcomes, and 
$\trf{{\cal N}_1, r} = \trf{{\cal N}_2, r}$ for every final outcome $r$.
If ${\cal N}_1 \equiv {\cal N}_2$ and ${\cal N}_2$ {\jd consists of a single} atom, then ${\cal N}_2$ 
is a {\em summary} of ${\cal N}_1$. 
\end{definition}

Notice that, according to this definition, all unsound negotiations are equivalent. This amounts
to considering soundness essential for a negotiation: if it fails, we do not care about the rest. 

\subsection{Deciding soundness}

The {\em reachability graph} of a negotiation ${\cal N}$ has all markings reachable from $\vx_0$ as {\jd vertices}, and an arc leading from $\vx$ to $\vx'$ whenever $\vx \by{(n,r)} \vx'$.

The soundness problem consists of deciding if a given negotiation is
sound. It can be solved by (1) computing the reachability graph of ${\cal N}$
and (2a) checking that every atom appears at some arc, and (2b) that, for every reachable marking $\vx$, there is an occurrence sequence $\sigma$
such that $\vx \by{\sigma} \vx_f$.

Step (1) needs exponential time, and steps (2a) and (2b) are polynomial in the size of the reachability graph. 
So the algorithm is single exponential in 
the number of atoms. This cannot be easily avoided, because the problem is PSPACE-complete, and NP-complete for acyclic negotiations.

Recall that a language $L$ is in the class DP if there exist languages $L_1, L_2$ in NP and
co-NP, respectively, such that $L=L_1 \cap L_2$ \cite{PapadimitriouY82}.

\begin{theorem}
The soundness problem is PSPACE-complete. For acyclic negotiations, the problem is 
co-NP-hard and in DP (and so at level $\Delta^P_2$ of the polynomial hierarchy).
\end{theorem}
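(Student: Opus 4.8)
The plan is to prove the four claims separately: PSPACE-membership and PSPACE-hardness for arbitrary negotiations, and co-NP-hardness and DP-membership for acyclic ones.

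\textbf{PSPACE-membership.} A negotiation fails soundness either because some atom is never enabled, or because some reachable marking $\vx$ cannot reach $\vx_f$. Both are reachability questions over the reachability graph, whose vertices are markings $\vx \colon \agents \to 2^N$; each marking has polynomial size (one subset of $N$ per agent) and a successor can be computed in polynomial time from the small-step rule. So I would guess the witnessing marking (and, for condition (b), additionally guess nondeterministically a path back to $\vx_f$ step by step, keeping only the current marking), giving an NPSPACE and hence, by Savitch, a PSPACE algorithm. Concretely: to refute (a) for a fixed atom $n$, check in PSPACE that $n$ is unreachable-as-enabled; to refute (b), nondeterministically reach some $\vx$ and then verify in coNPSPACE $=$ PSPACE that no occurrence sequence leads from $\vx$ to $\vx_f$. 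Combining over all atoms keeps us in PSPACE.

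\textbf{PSPACE-hardness.} I would reduce from a standard PSPACE-complete problem whose state space is naturally a product of small local components — for instance acceptance of a linear-bounded automaton, or better, the reachability/liveness problem for a bounded communicating automata or 1-safe Petri net model. Each agent simulates one tape cell (or one coordinate of the global state), each atom encodes one transition of the machine synchronising the relevant cells, and outcomes encode nondeterministic branching. One extra "cleanup" atom, enabled exactly when the machine reaches its accepting configuration, routes all agents to $n_f$; if the machine never accepts, the corresponding occurrence sequences cannot be completed, so the negotiation is unsound. Thus soundness of $\N$ encodes acceptance, giving PSPACE-hardness. I expect this reduction to be the most delicate part, because I must ensure the \emph{only} way to reach $\vx_f$ is through the simulated accepting run — i.e. no spurious deadlock-free completions — and that every atom is enabled along some run so that condition (a) does not trivially fail for unrelated reasons.

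\textbf{Acyclic case.} For co-NP-hardness I would reduce from the complement of \textsc{3-Sat} (or from \textsc{Tautology}-style problems): build an acyclic negotiation where an initial atom lets a "chooser" agent pick a truth assignment via a sequence of binary-outcome atoms, and subsequent atoms check the clauses; the negotiation is sound iff \emph{every} assignment satisfies the formula, so unsoundness corresponds to satisfiability of the negation, placing the non-soundness problem in NP and soundness in coNP-hard territory. Acyclicity makes all occurrence sequences of bounded length, so a run is a polynomial certificate. For DP-membership, observe that in an acyclic negotiation soundness $=$ (a) $\wedge$ (b), where refuting (a) — exhibiting an atom that is enabled at no reachable marking, equivalently producing, for that atom, a short proof of non-reachability — and refuting (b) — exhibiting a reachable marking with a short certificate that it deadlocks or loops below $\vx_f$ — are both the kind of "short witness" statements that put non-soundness in NP; hence soundness is the intersection of an NP language (condition on (b)) and a coNP language, i.e. in DP. The remaining work is to check carefully that condition (a) in the acyclic setting really does admit polynomial-size certificates on both sides (a run enabling $n$, versus a combinatorial argument that no run does), which I view as the main technical point for the DP bound; given DP $\subseteq \Delta^P_2$, the polynomial-hierarchy placement follows immediately.
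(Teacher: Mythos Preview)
Your PSPACE arguments are essentially the paper's. For membership, the paper likewise invokes Savitch and the polynomial size of markings. For hardness, the paper reduces from acceptance of a \emph{deterministic} linear-bounded automaton: determinism is precisely the device that resolves the worry you flag, since it forces a single maximal occurrence sequence whose fate (reaching $n_f$ or deadlocking) directly encodes acceptance. Your sketch with nondeterministic branching would need this or an equivalent trick to go through.

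Your co-NP-hardness plan is in the right spirit but too thin to stand as a proof. The paper's reduction uses one agent per variable plus a \emph{judge} agent whose only route to $n_f$ passes through some atom $\mathit{False}_j$, enabled exactly when clause $c_j$ is falsified by the chosen assignment; a satisfying assignment therefore strands the judge, so the negotiation is sound iff $\phi$ is unsatisfiable. You would need a comparable gadget to make ``subsequent atoms check the clauses'' rigorous while also ensuring condition (a) holds regardless.

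The genuine gap is your DP argument. You claim that ``refuting (a)'' --- exhibiting an atom that is enabled at no reachable marking --- admits a short certificate (``a short proof of non-reachability''). It does not: this is a universal statement over all reachable markings, and even in the acyclic case there can be exponentially many of them, so no polynomial-size witness is apparent. You have the complexity of the two halves inverted. The correct decomposition, which is the paper's, is:
\begin{itemize}
\item Condition (a), ``every atom can be enabled'', is in NP: for each atom guess a short enabling occurrence sequence (length bounded by the number of atoms, by acyclicity).
\item Condition (b), ``every occurrence sequence extends to a large step'', is in coNP: acyclicity rules out infinite runs, so failure of (b) is witnessed by a single maximal occurrence sequence ending in a deadlock marking $\neq \vx_f$, of polynomial length.
\end{itemize}
Then soundness $=$ (a) $\wedge$ (b) with (a) $\in$ NP and (b) $\in$ coNP, which is the definition of DP. Your paragraph also contradicts itself: you first argue both refutations have short witnesses (which, if true, would place soundness in coNP outright), and then assert that (b) is the NP half of the intersection. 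Swapping the roles of (a) and (b) and dropping the non-existent ``short proof of non-reachability'' repairs the argument.
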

\begin{proof}

\noindent
{\em The soundness problem is in PSPACE.} 

\noindent
Membership in PSPACE can be proved by observing that the soundness problem can be formulated in CTL, and then applying the PSPACE algorithm for CTL and 1-safe Petri nets of \cite{DBLP:conf/ac/Esparza96}. This algorithm only assumes that, given a marking, one can compute a successor marking in polynomial time, which is the case both for Petri nets and for negotiations. However, since we only need a very special case of the CTL algorithm, we provide a self-contained proof.

We show that both conditions for soundness can be checked in nondeterministic polynomial space. The result then follows from Savitch's theorem (NPSPACE=PSPACE).

The first condition is: every atom is enabled at some reachable
marking. For this we consider each atom $n$ in turn, and guess step
by step an occurrence sequence ending with an occurrence of $n$.
This only requires to store the marking reached by the sequence executed so far.

The second condition is: every occurrence sequence from the initial
marking is either a large step or can be extended to a large step.
This case is a bit more involved. Let $S$ denote the problem of
checking this second condition. We prove $S \in \text{PSPACE}$.

\begin{itemize}
\item[(1)] The following problem is in PSPACE: given some
marking $\vx$, check that no occurrence sequence starting at $\vx$
ends with the final atom. \\
Let us call this problem NO-OCC. We have
$\overline{\text{NO-OCC}} \in \text{NPSPACE}$, because we can nondeterministically guess
an occurrence sequence starting at $\vx$ that
ends with the final atom (we guess one step at a time). Since
NPSPACE=PSPACE=co-PSPACE, we get $\text{NO-OCC} \in
\text{PSPACE}$.\\

\item[(2)] $\overline{S} \in \text{NPSPACE}$. \\
$\overline{S}$ consists of
checking the existence of a sequence $\sigma$, firable from the
initial marking, that is neither a large step nor can be extended to
it. For this we guess a sequence $\sigma$ step by step that does not
end with the final atom. Then we consider the marking $\vx$ reached by
the occurrence of $\sigma$. Clearly, we have $\sigma \in \overline{S}$
if{}f $\vx \in \text{NO-OCC}$. So it suffices to apply our
deterministic polynomial-space algorithm for NO-OCC (see (1)).\\

\item[(3)] $S \in PSPACE$.\\
Follows from (2) and NPSPACE=PSPACE=co-PSPACE.
\end{itemize}

\noindent
{\em The soundness problem is PSPACE-hard.} 

\noindent
For PSPACE-hardness,
we reduce the problem of deciding if a
deterministic linearly bounded automaton (DLBA) recognizes an input to
the soundness problem. Let $A=(Q,\Sigma,\delta,q_0,F)$ be a DLBA, and
consider an input $w=a_1 \ldots a_k \in \Sigma^*$. The construction is
very similar to that of \cite{DBLP:conf/ac/Esparza96} for proving PSPACE -hardness of the reachability problem for 1-safe Petri nets, and so we do
not provide all details. The negotiation ${\cal N}_A$ has a control
agent $C$, a head agent $H$, and a cell agent $T_i$ for every tape
cell (i.e., $1 \leq i \leq k$). All agents have only one internal
state, i.e., the internal states are irrelevant. The negotiation has
an atom $n[q,h,a]$ (with only one outcome) for every
state $q$, every head position $1 \leq h \leq k$, and every $a \in
\Sigma$, plus an initial atom $n_0$ and a final atom $n_f$. The
parties of $n[q,h,a]$ are $C$, $H$, and $T_h$. The transition function
$\next$ is defined so that ${\cal N}_A$ simulates $A$ in the following
sense: if $A$ is currently in state $q$ with the head at position $h$,
and the contents of the tape are $b_1 \ldots b_k$, then the current
marking $\vx$ of the negotiation satisfies the following properties:

\begin{itemize}
\item if $q\neq q_f$, then $\vx(C)$ is the set of atoms $n[h',q', a]$
such that $q'=q$, and both $h'$ and $a$ are arbitrary; if $q = q_f$,
then $\vx(C) = \{ n_f\}$;

\item $\vx(H)$ is the set of atoms $n[h',q', a]$ such that $h'=h$ and $q',a$ are arbitrary, plus the final atom;

\item $\vx(T_i)$ is the set of atoms $n[h',q', a]$ such that $h'=i$, $q'$ is arbitrary, and $a=b_i$, plus the final atom.
\end{itemize}

\noindent (Intuitively, agent $C$ is only ready to
engage in atoms for the state $q$; agent $H$ is only ready to engage
in atoms for the position $h$; and $T_h$ is only ready to engage in
atoms for the letter $b_h$.) These properties guarantee that the only atom enabled by $\vx$ is
$n[h,q,b_h]$ if $q \neq q_f$, or the atom $n_f$ if $q=q_f$. So the
negotiation ${\cal N}_A$ has only one initial occurrence sequence, which
corresponds to the execution of $A$ on $w$.

It remains to define $\next$ so that it satisfies these properties. For the initial atom we take (recall that the input of the DLBA $A$ is the word $w=a_1 \ldots a_k$):

$$\begin{array}{rcl}
\next(n_0, C, {\it step}) & = & \{n[h',q_0, a'] \mid 1 \leq h' \leq k, a' \in \Sigma \} \\
\next(n_0, H, {\it step}) & = & \{n[1,q', a'] \mid q' \in Q, a' \in \Sigma \} \\
\next(n_0, T_{i}, {\it step}) & = & \{n[i,q_0, a_i] \}
\end{array}$$

For the transition function of an atom $n[q,h,a]$ we must consider the three possible cases of the transition relation
(head moves to the right, to the left, or stays put). We only deal with the case in which the machine moves to the
right, the others being analogous. Assume $\delta(q,a)=(\hat{q},\hat{a},R)$.
Then we take
$$\begin{array}{rcl}
\next(n[h,q,a], C, {\it step}) & = & \{n[h',\hat{q}, a'] \mid 1 \leq h' \leq k, a' \in \Sigma \} \\
\next(n[h,q,a], H, {\it step}) & = & \{n[h+1,q', a'] \mid q' \in Q, a' \in \Sigma \} \\
\next(n[h,q,a], T_{h}, {\it step}) & = & \{n[h,q', \hat{a}] \mid q' \in Q \}
\end{array}$$

Since $A$ is deterministic, ${\cal N}_A$ has only one maximal occurrence sequence, which is
a large step if{}f $A$ accepts. So ${\cal N}_A$ is sound if{}f $A$ accepts.\\

\noindent
{\em The soundness problem for acyclic negotiations is in DP.}

\noindent
We first observe that no occurrence of an acyclic negotiation contains an atom more than once
(loosely speaking, once the tokens of the parties of the atom have ``passed'' beyond it, they cannot return). It follows that the length of an occurrence sequences is at most equal to the number of atoms. To check soundness we must
check that (1) every atom can be enabled, and that (2) every occurrence sequence can be extended to a large step.
Checking (1) can be done by guessing
in polynomial time enabling sequences for all atoms, and so (1) is in NP.
Checking the negation of (2) can be done by guessing in polynomial
time an occurrence sequence that cannot be extended to a large step,
and so (2) is in coNP. So the conjunction of (1) and (2) is in DP.\\

\noindent
{\em The soundness problem for acyclic negotiations is co-NP-hard.}

\noindent We reduce 3-CNF-SAT to non-hardness.
Given a boolean formula $\phi$ with variables
$x_i$, $1 \leq i \leq n$ and clauses $c_j$, $1 \leq j \leq m$, we construct a
negotiation ${\cal N}_\phi$ with an agent $X_i$ for each $x_i$, and
an agent $J$ (for judge). W.l.o.g. we assume that no clause of $\phi$ is a tautology.
For each variable $x_i$, ${\cal N}_\phi$ has an atom
${\it Set\_x}_i$ with $X_i$ as only party and outcomes $\texttt{true}$ and $\texttt{false}$.
For each clause $c_j$, the negotiation ${\cal N}_\phi$ has an atom
${\it False}_j$ whose parties are the variables appearing in $c_j$ and the judge $J$.
The atom has only one outcome $\texttt{false}$.

After the initial atom, agent $X_i$ engages in ${\it Set\_x}_i$ and sets $x_i$ to a value
$b \in \{\texttt{true},\texttt{false}\}$ by choosing the appropriate outcome. After that, $X_i$
is ready to engage in the atoms ${\it False}_j$ satisfying the following condition:
the clause $c_j$ is {\em not} made true by setting $x_i$ to $b$; moreover, it
is also ready to engage in the final atom. As a consequence, ${\it False}_j$ becomes enabled if{}f the assignment chosen by the $X_i$'s makes $c_j$ false.
Finally, after the occurrence of a ${\it False}_j$, its parties are only ready to engage in the final atom.

After the initial atom, the judge $J$ is ready to engage in all atoms ${\it False}_j$, and then, if any of them occurs, in the final atom.

We argue that ${\cal N}_\phi$ is sound if{}f $\phi$ is satisfiable.
Notice first that, since by assumption no clause is
a tautology, every ${\it False}_j$ atom is enabled by some occurrence sequence.
So all atoms but perhaps the final atom can be enabled by some sequence. So
${\cal N}_i$ is sound if{}f every occurrence sequence can be extended to a large step,
and therefore it suffices to show that $\phi$ is satisfiable if{}f
every occurrence sequence of ${\cal N}_\phi$ can be extended to a large step.

If $\phi$ is unsatisfiable then, whatever the assignment determined
by the outcomes of the ${\it Set\_x}_i$'s, some clause is
false, and so at least one of the ${\it False}_j$ atoms is enabled. After
some ${\it False}_j$ occurs, the final atom becomes enabled, and so
the computation can be extended to a large step.

If $\phi$ is satisfiable, then consider an initial occurrence sequence in which the atoms ${\it Set\_x}_i$ occur, and they choose the outcomes corresponding to a satisfying assignment. Then none of the ${\it False}_j$ atoms become enabled.
Moreover the final atom is not enabled either, because the judge $J$ is not ready to engage
in it. So the occurrence sequence cannot be extended to a large step.
\qed
\end{proof}

\subsection{A summarization algorithm}
\label{subsec:sumalg}
The {\em summarization problem} consists of computing
a summary of a given negotiation, if it is sound. 
A straightforward solution follows these steps:

\begin{itemize}
\item[(1)] Compute the reachability graph of ${\cal N}$. Interpret it as 
a weighted finite automaton over the alphabet of transformers $\trf{n,r}$,  
with $\vx_0$ as initial state, and $\vx_f$ as final state.
\item[(2)] Compute the sum over all paths $\sigma$ leading from $\vx_0$ to $\vx_f$ of the
transformers $\trf{\sigma}$. We recall a well-known algorithm for this based on state
elimination (see e.g. \cite{HUM}). The algorithm proceeds in phases consisting of 
the following three steps:
\begin{itemize}
\item[(2.1)] Exhaustively replace steps $\vx \by{f_1} \vx'$, $\vx \by{f_2} \vx'$
by one step $\vx \by{f_1+f_2} \vx'$.
\item[(2.2)] Pick a state $\vx$ different from $\vx_0$ and $\vx_f$. If there is a 
self-loop $\vx \by{f} \vx$, replace all steps $\vx \by{g} \vx'$, where
$\vx' \neq \vx$, by the step $\vx \by{g^* f} \vx'$, and then remove the self-loop.
\item[(2.3)] For every two steps $\vx_1 \by{f_1} \vx$ and $\vx \by{f_2} \vx_2$, 
add a step $\vx \by{f_1 f_2} \vx_2$ and remove state $\vx$ together with its
incident steps.
\end{itemize}
\end{itemize}

Clearly, step  (1) takes exponential time in the number of atoms. Steps
 (2.1)-(2.3) can be seen as {\em reduction rules} that replace an automaton by a smaller one
with the same sum over all paths. In the next section we provide similar rules, but at the
syntactic level, i.e., rules {\jd which} act directly on the negotiation diagram, and {\em not} on the
reachability graph. This avoids the construction of the reachability graph.
Two of the three rules are straightforward generalizations of (2.1) and (2.3)
above, while the third allows us to 
remove  certain 
useless arcs from a negotiation.

\section{Reduction Rules}

\label{sec:redrules}

A {\em reduction rule}, or just a rule, 
is a binary relation on the set of negotiations. Given a rule $R$,
we write ${\cal N}_1 \by{R} {\cal N}_2$ for $({\cal N}_1, {\cal N}_2) \in R$.
A rule $R$ is {\em correct} if it preserves equivalence, i.e., if 
${\cal N}_1 \by{R} {\cal N}_2$ implies ${\cal N}_1 \equiv{\cal N}_2$. 
Notice that, in particular, this implies that ${\cal N}_1$ is sound if and only if 
${\cal N}_2$ is sound.

Given a set of
rules ${\cal R} = \{R_1, \ldots, R_k\}$, we denote by ${\cal R}^*$ the reflexive 
and transitive closure of $R_1 \cup \ldots \cup R_k$. We say that ${\cal R}$
is {\em complete with respect to a class of negotiations} if, for every
negotiation ${\cal N}$ in the class, there is a negotiation ${\cal N'}$ consisting of a single atom
such that ${\cal N} \by{{\cal R}^*} {\cal N'}$.

We describe rules as pairs of a {\em guard} and an {\em action}; 
${\cal N}_1 \by{R} {\cal N}_2$ holds if ${\cal N}_1$ satisfies the guard and 
${\cal N}_2$ is a possible result of applying the action to ${\cal N}_1$.\\

\noindent
{\bf \em Merge rule.} Intuitively, the {\em merge rule} merges two outcomes with identical {\jd next enabled atoms} into one single outcome. It corresponds to the rule of step (2.1) in the previous section.

\begin{definition}{Merge rule}

\noindent {\bf Guard}: \begin{tabular}[t]{l} $N$ contains an atom $n$ 
with two distinct outcomes $r_1, r_2 \in \outc_n$,
 such \\ that $\next(n,a,r_1) = \next(n,a,r_2)$ for every $a \in \agents_n$.
\end{tabular}

\noindent {\bf Action}: \begin{tabular}[t]{ll}
(1) & $\outc_n \leftarrow (\outc_n \setminus \{r_1, r_2\}) \cup \{r_f\}$,
where $r_f$ is a fresh name. \\
(2) & For all $a \in P_n$: $\next(n,a,r_f) \leftarrow \next(n,a,r_1)$. \\
(3) & $\delta (n, r_f) \leftarrow \delta (n, r_1) \cup \delta (n, r_2)$.
\end{tabular}
\end{definition}
It is easy to see that the merge rule is correct for arbitrary negotiations.\\

\noindent {\bf \em Shortcut rule.} The shortcut rule corresponds to the rule of step (2.3) in the previous section.
We need a preliminary definition. 
\je{\begin{definition}
Given atoms $n,n'$, we say that $(n,r)$ {\em unconditionally enables} $n'$
if $P_n \supseteq P_{n'}$ and $\next(n,a,r) = \{n'\}$ for every $a \in P_{n'}$.
\end{definition}}
\noindent Observe that if $(n,r)$ unconditionally enables $n'$
then, for {\em every} marking $\vx$ that enables $n$, 
the marking $\vx'$ given by $\vx \by{(n,r)} \vx'$ enables $n'$. \je{Moreover, $n'$ 
can only be disabled by its own occurrence.}

Loosely speaking, the shortcut rule merges the outcomes of two atoms that can occur one after 
the other into one single outcome with the same effect. Consider the negotiation fragment shown 
on the left of
Figure \ref{fig:shortcut}. The guard of the rule will state that $n$ must
unconditionally enable $n'$, which is the case. For every outcome of $n'$, 
say $r_1$, the action of the rule adds  
a fresh outcome $r_{1f}$ to $n$, and modifies the negotiation 
so that the occurrence of $(n,r_{1f})$ has the same 
effect as the occurrence of the sequence $(n,r)(n',r_1)$. In the figure, 
shortcutting the outcome $(n,r)$ leaves $n'$ without any input arc, and in 
this case the rule also removes $n'$.
Otherwise we require that at least one input arc of a party $\tilde{a}$ of  $n'$ is an arc (i.e., not a proper hyperarc) from some atom $\tilde{n}\neq n$, annotated by $\tilde{r}$. This implies that after the occurrence of $(\tilde{n},\tilde{r})$, $n'$ is the only atom agent $\tilde{a}$ is ready to engage in.

\begin{definition}{Shortcut rule}
\label{def:shortcutrule}

\noindent {\bf Guard}: $N$ contains an atom $n$ with an outcome $r$, and 
an atom $n'$, $n' \neq n$, such that $(n,r)$ unconditionally enables 
$n'$. {\jd Moreover, if $n' \in \next (\tilde{n}, \tilde{a}, \tilde{r})$ for at least one
$\tilde{n} \neq n$ with $\tilde{a} \in P_{\tilde{n}}$ and 
 $\tilde{r} \in R_{\tilde{n}}$, 
then $\{n'\} = \next (\tilde{n}, \tilde{a}, \tilde{r})$ for some $\tilde{n} \neq n$, $\tilde{a} \in P_{\tilde{n}}$,  $\tilde{r} \in R_{\tilde{n}}$.}

\noindent {\bf Action}: 
\begin{tabular}[t]{ll}
(1) & $\outc_n \leftarrow (\outc_n \setminus \{r\}) \cup \{r'_f \mid r' \in \outc_{n'}\}$, where $r'_f$ are fresh names. \\
(2) & For all $a \in P_{n'}$, $r' \in \outc_{n'}$: $\next(n,a,r_f') \leftarrow \next(n',a ,r')$.\\
    & For all $a \in P \setminus P_{n'}$, $r' \in \outc_{n'}$: $\next(n,a,r'_f) \leftarrow \next(n,a,r)$. \\
(3) & For all $r' \in \outc_{n'}$: $\trf{n, r'_f}\leftarrow \trf{n,r}\trf{n',r'}$. \\
(4) & If $\next^{-1}(n')=\emptyset$ after (1)-(3), then remove $n'$ from $N$, where \\
    & $\next^{-1}(n') = \{ (\tilde{n},\tilde{a},\tilde{r}) \in T(N) \mid n' \in \next(\tilde{n},\tilde{a},\tilde{r}) \}$.
\end{tabular}
\end{definition}

\begin{figure}[t]
\centerline{\scalebox{0.35}{\input{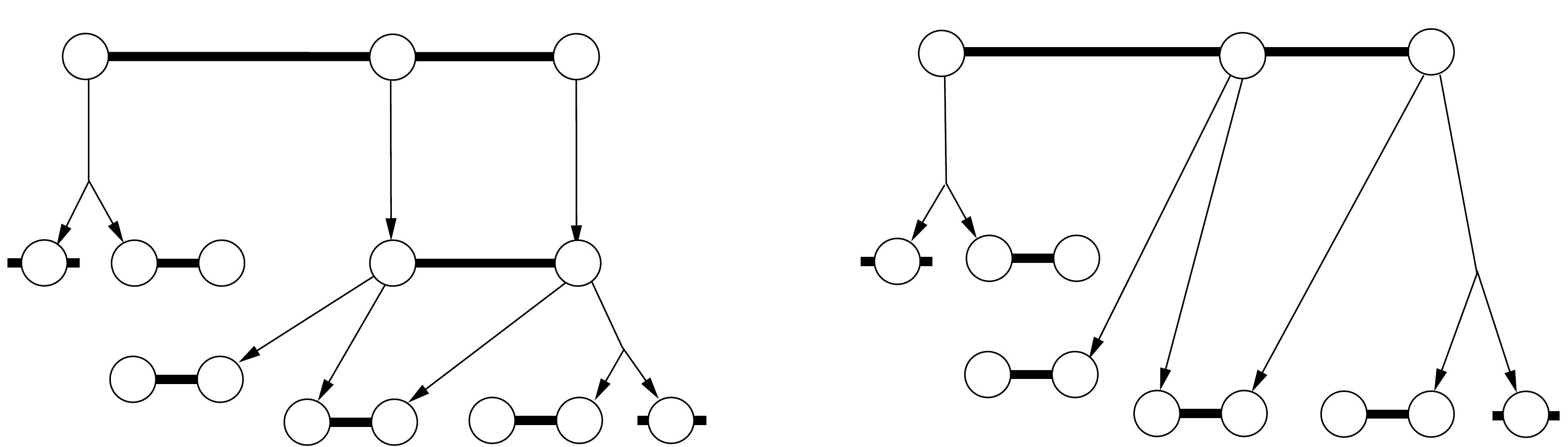_t}}}
\caption{An application of the shortcut rule}
\label{fig:shortcut}
\end{figure}

\begin{theorem}
\label{thm:shortcound}
The shortcut rule is correct.
\end{theorem}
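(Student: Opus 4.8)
I would prove correctness of the shortcut rule by showing that, whenever $\mathcal{N}_1 \by{R} \mathcal{N}_2$ via the rule with the distinguished atoms $n$, $n'$ and outcome $r$, there is a tight correspondence between the occurrence sequences of $\mathcal{N}_1$ and those of $\mathcal{N}_2$. The natural map sends an occurrence sequence $\sigma_2$ of $\mathcal{N}_2$ to the sequence $\sigma_1$ obtained by replacing every step $(n, r'_f)$ (for $r' \in \outc_{n'}$) by the two steps $(n, r)(n', r')$, and leaving all other steps untouched; conversely, in a sequence of $\mathcal{N}_1$, every occurrence of $(n,r)$ must — because $(n,r)$ unconditionally enables $n'$ and, by the observation after the definition of unconditional enabling, $n'$ can then only be disabled by its own occurrence — be followed, possibly after interleaving with steps not involving $P_{n'}$, by an occurrence of $n'$; after commuting independent steps (steps with disjoint party sets commute, a standard diamond/concurrency argument) we may assume $(n,r)$ is immediately followed by some $(n', r')$, and we contract the pair to $(n, r'_f)$. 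I would package the commutation argument as a lemma: in $\mathcal{N}_1$, from any reachable marking, any occurrence sequence can be reordered into an "$n$-then-$n'$-adjacent" form without changing its start marking, end marking, or its transformer (up to the equality $\trf{n,r}\trf{n',r'} = \trf{n,r'_f}$ from action (3)).

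**Key steps, in order.** (i) Establish the bijection-up-to-reordering between initial occurrence sequences of $\mathcal{N}_1$ ending in $n_f$ and those of $\mathcal{N}_2$ ending in $n_f$ — in particular, large steps correspond to large steps, and the contracted/expanded pairs carry matching transformers, so that for each final outcome $\bar r$ the union $\trf{\mathcal{N}_1, \bar r} = \bigcup_{\sigma} \trf{\sigma}$ is unchanged. This handles the "both sound $\Rightarrow$ same summary transformers" clause of equivalence. (ii) Show $\mathcal{N}_1$ sound $\iff$ $\mathcal{N}_2$ sound. For deadlock-freedom one argues that a deadlock in one induces a deadlock in the other via the marking correspondence; the delicate direction uses the extra part of the guard (if $n'$ still has an incoming arc after the rule, it comes via a genuine arc $\{n'\} = \next(\tilde n, \tilde a, \tilde r)$ with $\tilde n \neq n$), which guarantees that removing the outcome $r$ from $n$ does not strand agents of $P_{n'}\setminus\{$some leader$\}$ — i.e., it ensures that whenever $n'$ is reachable-enabled in $\mathcal{N}_2$ the corresponding marking in $\mathcal{N}_1$ can still enable $n'$, and vice versa. (iii) Check condition (a) of soundness (every atom enabled at some reachable marking) transfers: for atoms other than $n'$ this is immediate from (i); for $n'$ (when it survives) one uses that some sequence still reaches a marking enabling it, again via the guard's genuine-arc condition; and $n$'s new outcomes $r'_f$ are enabled exactly when $n$ was. (iv) Conclude $\mathcal{N}_1 \equiv \mathcal{N}_2$ by definition of equivalence, splitting on whether both are unsound (trivial) or both sound (by (i)–(iii)).

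**Main obstacle.** The hard part will be step (ii), and specifically showing that soundness is preserved in the direction where $n'$ is deleted, and — when $n'$ is kept — that the guard's "genuine incoming arc" condition is exactly what is needed for the marking correspondence to respect enabledness of $n'$. The subtlety is that in $\mathcal{N}_1$ an agent $a \in P_{n'}$ can be "parked" at $n'$ via the hyperarc from $(n,r)$ while waiting for the other parties, whereas in $\mathcal{N}_2$ that hyperarc is gone; one must verify that every reachable marking of $\mathcal{N}_2$ that would want to enable $n'$ is the image of a reachable marking of $\mathcal{N}_1$ in which $a$ got to $n'$ via the surviving genuine arc (using also that $n'$, once enabled, stays enabled until it fires). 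This is precisely the scenario the (omitted) counterexample figure illustrates, so the argument has to engage with it head-on rather than wave it away. The rest — the commutation/diamond lemma and the transformer bookkeeping — is routine concurrency-theory manipulation, though it must be stated carefully because transformers are relations (not functions) and concatenation of relations is what appears in $\trf{\sigma}$.
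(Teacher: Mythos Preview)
Your plan is essentially the paper's proof: the expansion map $\mathcal{N}_2 \to \mathcal{N}_1$ is the paper's Claim~1; the contraction via commutation is packaged as Claims~2--4 (a block decomposition $\sigma_1 (n,r) \sigma_1' (n',r_1) \sigma_2 \ldots$ rather than a diamond lemma, but with the same content); and the two directions of the soundness transfer are Claims~5--6 and~7--8, with the guard invoked exactly once, in Claim~7, to show that if $n'$ survives in $\mathcal{N}_2$ it can still be enabled there.

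Your obstacle paragraph, however, has the direction of the difficulty reversed and the picture of the arcs slightly wrong. There is no ``hyperarc from $(n,r)$'' to $n'$: unconditional enabling means $\next(n,a,r)=\{n'\}$ for every $a\in P_{n'}$, a plain arc. And lifting reachable markings of $\mathcal{N}_2$ back to $\mathcal{N}_1$ is the \emph{easy} direction (Claim~1). The guard is needed for the opposite implication, $\mathcal{N}_1$ sound $\Rightarrow$ $\mathcal{N}_2$ sound, and specifically for condition~(a) applied to $n'$: the genuine arc $\{n'\}=\next(\tilde n,\tilde a,\tilde r)$ with $\tilde n\neq n$ pins agent $\tilde a$ at $n'$ once $(\tilde n,\tilde r)$ fires in $\mathcal{N}_1$, so any large step through $(\tilde n,\tilde r)$ must later fire $n'$; the key observation is that both $(\tilde n,\tilde r)$ and this subsequent occurrence of $n'$ lie in some $\sigma_i$-block of the decomposition (agent $\tilde a$ cannot participate in $\tilde n$ during a $\sigma_i'$-block, since it is committed to $n'$ there), hence both survive the contraction and witness that $n'$ is enabled in $\mathcal{N}_2$. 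Once you flip the direction, your argument goes through as in the paper.
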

\begin{proof}
Let ${\cal N}_2$ be the result of applying the shortcut rule to ${\cal N}_1$.
Assume atoms $n$ and $n'$ and $r \in\outc_n$ of $n$ as in Definition \ref{def:shortcutrule}.
We say that an atom $\overline{n}$ occurs in an occurrence sequence $\sigma$ if $(\overline{n}, \overline{r})$ occurs in $\sigma$ for some $\overline{r} \in \outc_{\overline{n}}$. We call an occurrence sequence initial if it starts with the initial marking.

\noindent
{\em Claim 1.}
For each initial occurrence sequence $\sigma$ of ${\cal N}_2$, replacing all occurrences of $(n, r'_f)$ by $(n,r), (n',r')$ in $\sigma$ yields an initial occurrence sequence of ${\cal N}_1$ that leads to the same marking as $\sigma$ in ${\cal N}_2$.

\noindent
{\em Claim 2.} Each initial occurrence sequence $\sigma$ of ${\cal N}_1$ can be uniquely divided into $\sigma_1 (n,r) \sigma_1' (n',r_1) \sigma_2 (n,r) \sigma_2' (n',r_2) \ldots$ such that $(n,r)$ does not occur in the sequences $\sigma_1, \sigma_2, \ldots$ and $n'$ does not occur in $\sigma_1', \sigma_2', \ldots$. 


\noindent
{\em Claim 3.} If $\sigma = \sigma_1 (n,r) \sigma_1' (n',r_1) \sigma_2 (n,r) \sigma_2' (n',r_2) \ldots (n',r_k) \sigma_{k+1}$ (notation as in Claim 2) then
$\sigma_1 (n, r_{f 1}) \sigma'_1 \sigma_2 (n,r_{f 2}) \sigma'_2 \ldots\sigma'_k \sigma_{k+1}$ is an initial occurrence sequence of ${\cal N}_2$ leading to the same marking as $\sigma$ in ${\cal N}_1$. If there is an occurrence of $n'$ in $\sigma_i$ then a party $\tilde{a}$ of $n'$ must have been enabled by a previous occurrence of some $(\tilde{n},\tilde{a},\tilde{r})$ with $n\neq \tilde{n}$. In this case $n'$ still exists in ${\cal N}_2$.

\noindent {\em Proof:} For $i= 1, \ldots k$ we have that $\sigma_i'$ contains no atom with a party $a$ of $P_{n'}$ because $n'$ is the only atom with party $a$ enabled after the occurrence of $(n,r)$. So $\sigma_i'$ and $(n',r_i)$ can occur in arbitrary order. The sequence $(n,r)(n', r_i)$ can be replaced by the step $(n, r_{f i})$ of ${\cal N}_2$.

\noindent
{\em Claim 4.} If $\sigma = \sigma_1 (n,r) \sigma_1' (n',r_1) \sigma_2 (n,r) \sigma_2' (n',r_2) \ldots (n,r) \sigma'_{k}$ (as in Claim 2) then
$\sigma_1 (n, r_{f 1}) \sigma'_1 \sigma_2 (n,r_{f 2}) \sigma'_2 \ldots \sigma_k (n,r_{f k}) \sigma'_k$ is an initial occurrence sequence of ${\cal N}_2$ leading to the same marking as $\sigma (n', r_k)$ in ${\cal N}_1$ for an arbitrary $r_k \in \outc_{n'}$.

\noindent {\em Proof:} Since $(n,r)$ unconditionally enables $n'$, $n'$ is enabled after the last $(n,r)$ in $\sigma$, and it remains enabled because $\sigma'_k$ contains no occurrence of $n'$. So $\sigma (n', r_k)$ is also an initial occurrence sequence of ${\cal N}_1$. The claim follows using Claim 3 with $\sigma_{k+1}$ being empty.

\noindent
{\em Claim 5.} If ${\cal N}_2$ is sound then every atom $\overline{n} \in N_1$ appears in some initial occurrence sequence of ${\cal N}_1$.

\noindent {\em Proof:} If $\overline{n} \neq n'$ then this claim
follows immediately from soundness of ${\cal N}_2$ and Claim 1. Otherwise, again by its soundness, ${\cal N}_2$ has an initial occurrence sequence with
an occurrence of $n$. Clearly, the marking before the occurrence of $n$ also enables some $(n,r'_f)$, whence there is also a sequence including a step $(n,r'_f)$. Claim 1 achieves the result.

\noindent
{\em Claim 6.} If ${\cal N}_2$ is sound then every initial occurrence sequence $\sigma$ of ${\cal N}_1$ can be extended to a large step.

\noindent {\em Proof:} By Claim 3 and Claim 4, the marking reached by $\sigma$ or the marking reached by $\sigma (n,r_{f i})$ (for some $r_i \in R_{n_i})$ is also reachable in ${\cal N}_2$. Since ${\cal N}_2$ is sound, this sequence can be extended to a large step. By Claim 1, there is a corresponding occurrence sequence $\sigma'$ of ${\cal N}_1$. So either $\sigma'$ or $(n,r_{f i}) \sigma'$ extends $\sigma$ to a large step.

\noindent
{\em Claim 7.} If ${\cal N}_1$ is sound then every atom $\overline{n} \in N_2$ appears in some initial occurrence sequence of ${\cal N}_2$.

\noindent {\em Proof:}
If $\overline{n} \neq n'$ then this follows immediately from Claim 3 and Claim 4. So we only have to consider the case $\overline{n} = n'$. Then, in particular $n'$ still exists in ${\cal N}_2$. So $\next^{-1} (n')$ contains some $(\tilde{n}, \tilde{a}, \tilde{r})$ where $(\tilde{n}, \tilde{r}) \neq (n,r)$. By the guard of the shortcut rule, we have $\{n'\} = \next (\tilde{n}, \tilde{a}, \tilde{r})$ for some $\tilde{n}$, $\tilde{a}$ and $\tilde{r}$, i.e., after the occurrence of $(\tilde{n}, \tilde{r})$ only the occurrence of $n'$ can remove the token of $\tilde{a}$. Since ${\cal N}_1$ is sound, there is an initial occurrence sequence that enables $\tilde{n}$. This sequence can be extended to a large step, whence $n'$ occurs after $\tilde{n}$. Call the entire sequence $\sigma$.

Taking the division of $\sigma$ as in Claim 2, both the occurrence of $\tilde{n}$ and the subsequent occurrence of $n'$ appear in some subsequence $\sigma_i$, because the agent $\tilde{a}$ is ready to engage only in $n'$ during all $\sigma'_i$ and can thus not participate in $\tilde{n}$. The transformation of Claim 3 (Claim 4, respectively) leads to an occurrence sequence of $ {\cal N}_2$ that still includes all subsequences $\sigma'_i$ and therefore also includes an occurrence of $n'$.

\noindent
{\em Claim 8.} If ${\cal N}_1$ is sound then every initial occurrence sequence $\sigma$ of ${\cal N}_2$ can be extended to a large step.

\noindent {\em Proof:}
By Claim 1, the marking reached by $\sigma$ is also reachable in ${\cal N}_1$.
Moreover, the translated sequence has an occurrence of $n'$ after its last occurrence of $(n,r)$.
By soundness of ${\cal N}_1$, $\sigma$ can be extended by some $\sigma'$ to a large step.
Since this sequence ends with the empty marking, there is some occurrence of $n'$ after the last occurrence of $(n,r)$.
So the entire sequence $\sigma \sigma'$ can be divided into

$ \sigma_1 (n,r),(n',r_1) \sigma_2 (n,r) (n',r_2) \ldots \sigma_l \sigma_{l+1} (n,r) \sigma'_{l+1} (n', r_{l+2}) \ldots (n',r_k) \sigma_{k+1}$ (notations as in Claim 2) where $\sigma$ comprises everything up to $\sigma_l$.
We apply Claim 3 and obtain the large step of ${\cal N}_2$: 

$ \sigma_1 (n,r_{f 1}) \sigma_2 (n,r_{f 2}) \ldots \sigma_l \sigma_{l+1} (n,r_{f l+1}) \sigma'_{l+1} \ldots \sigma_{k} (n,r_{f k}) \sigma'_{k} \sigma_{k+1}$. 

\noindent
Since the subsequence until $\sigma_l$ reaches the same marking as $\sigma$, the result follows.

The proof of the theorem follows immediately from Claims 5 to 8.

\qed

\end{proof}

\noindent
{\bf \em Useless arc rule.} Consider the negotiation on the left of Figure \ref{fig:uselessarc},
in which all atoms have one outcome $r$. We have $\next(n_0,a,r) = \{n_1, n_f\}$, i.e.,
after the occurrence of $(n_0,r)$ agent $a$ is ready to engage in both $n_1$ and $n_f$. However, $a$
always engages in $n_1$, because the only large step is $(n_0,r)(n_1,r)(n_2,r)(n_f,r)$.
In other words, we can set $\next(n_0,a,n_f) = \{n_1\}$ without changing the behavior.
Intuitively, we say that the arc (more precisely, the leg of the hyperarc) leading to the $a$-port of
$n_f$ is useless. The useless arc rule identifies and removes some useless arcs.

\begin{figure}[h]
\centerline{\scalebox{0.35}{\input{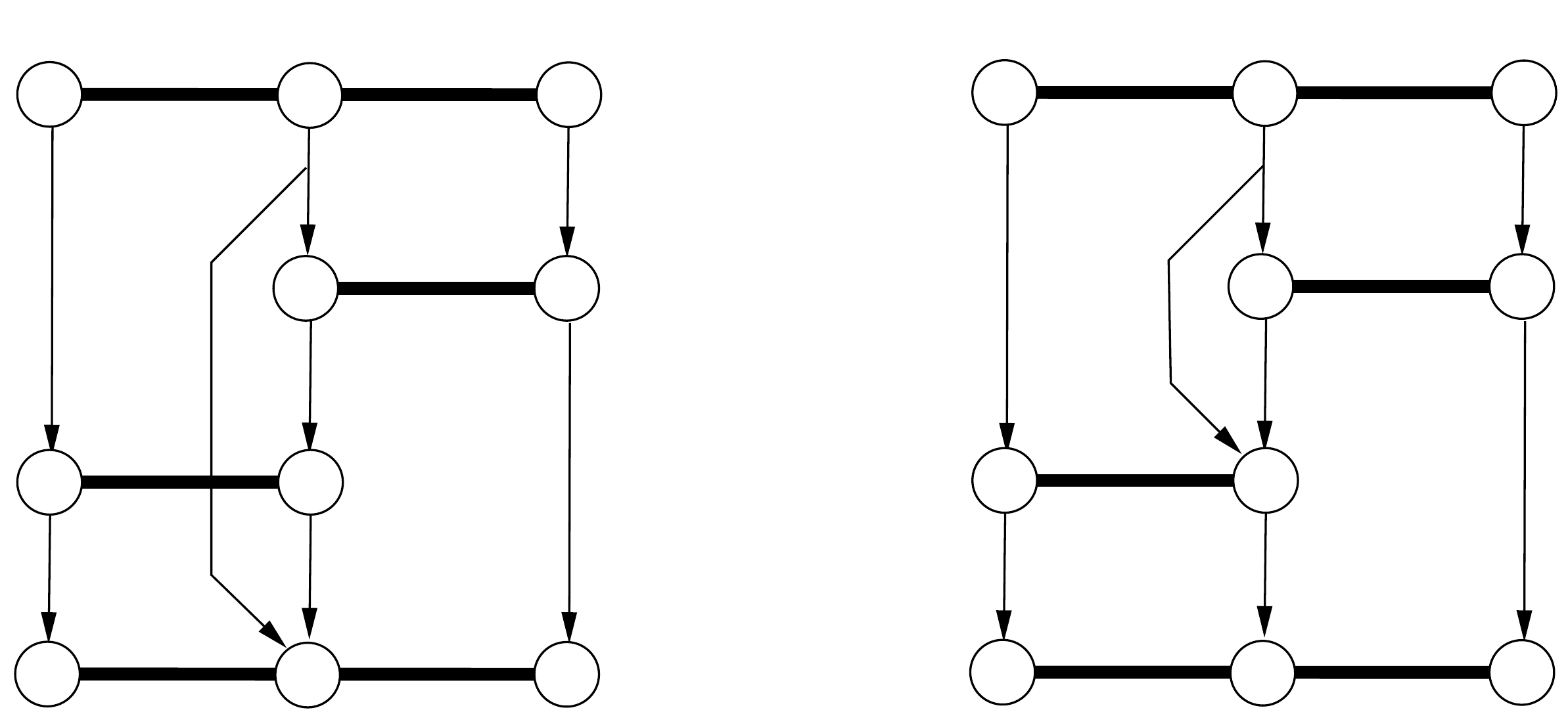_t}}}
\caption{The useless arc rule can only be applied to the left negotiation}
\label{fig:uselessarc}
\end{figure}

\begin{definition}{Useless arc rule.}
\label{def:redu}

\noindent {\bf Guard}: There are $(n,a,r), (n,b,r) \in T(N)$ and two distinct atoms $n', n'' \in N$ 
such that $a,b \in P_{n'} \cap P_{n''}$, $n', n'' \in \next(n,a,r)$ and $\next(n,b,r) = \{n'\}$.

\noindent {\bf Action}: $\next(n,a,r) \leftarrow \next(n,a,r) \setminus \{n''\}$.
\end{definition}

The rule can be applied to the negotiation on the left of Figure \ref{fig:uselessarc}
by instantiating $n:= n_0$, $n':=n_1$, and $n'' := n_f$. 
It cannot be applied to the negotiation on the right. If we set
$n:= n_0$, $n':=n_1$, and $n'':= n_2$, then $a \notin P_{n_2}$.

\begin{theorem}
The useless arc rule is correct.
\end{theorem}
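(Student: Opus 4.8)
The plan is to prove correctness by showing that the useless arc rule does not change the behaviour that matters: if $\mathcal{N}_2$ is obtained from $\mathcal{N}_1 = (N,n_0,n_f,\next)$ by removing $n''$ from $\next(n,a,r)$, then $\mathcal{N}_1$ and $\mathcal{N}_2$ have \emph{exactly the same} initial occurrence sequences, and a common initial occurrence sequence reaches $\vx_f$ in one negotiation iff it does in the other. This suffices for ${\cal N}_1 \equiv {\cal N}_2$: the rule changes neither the agents, nor $n_0$ and $n_f$, nor the outcome sets, nor any transformer $\trf{\overline n,\overline r}$; hence the large steps of the two negotiations coincide, so the sets $E_r$ coincide and $\trf{{\cal N}_1,r}=\trf{{\cal N}_2,r}$ for every final outcome $r$; moreover an atom is enabled at some reachable marking of ${\cal N}_i$ iff it occurs in some initial occurrence sequence of ${\cal N}_i$ (prepend a prefix reaching that marking, resp.\ cut the sequence just before the occurrence), so soundness condition (a) is equivalent for the two negotiations, and condition (b) is equivalent because it only quantifies over occurrence sequences.

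The inclusion ``every initial occurrence sequence of ${\cal N}_2$ is one of ${\cal N}_1$'' is the easy direction and needs no soundness assumption. By induction on the length of a common prefix $\sigma$, reaching $\vx_1$ in ${\cal N}_1$ and $\vx_2$ in ${\cal N}_2$, one shows $\vx_2(c)\subseteq\vx_1(c)$ for every agent $c$ — the two markings differ at most at $a$, where $\vx_1(a)$ may additionally contain $n''$. Then every atom enabled by $\vx_2$ is enabled by $\vx_1$, and firing it preserves the inclusion: the only case that is not an equality is firing $(n,r)$, after which $\vx_1(a)=\next(n,a,r)$ while $\vx_2(a)=\next(n,a,r)\setminus\{n''\}$.

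The reverse inclusion is the heart of the proof. I would strengthen the coupling between $\vx_1$ and $\vx_2$ along a common prefix to the invariant: $\vx_1(c)=\vx_2(c)$ for all $c\neq a$, and either (i) $\vx_1(a)=\vx_2(a)$, or (ii) $\vx_1(a)=\next(n,a,r)$, $\vx_2(a)=\vx_1(a)\setminus\{n''\}$ with $n''\notin\vx_2(a)$, and $\vx_1(b)=\vx_2(b)=\{n'\}$. Clause (ii) is entered precisely by firing $(n,r)$, which by the guard sets $a$'s readiness to $\next(n,a,r)$ (containing $n'$ and $n''$) and $b$'s readiness to $\next(n,b,r)=\{n'\}$; and while in case (ii) the token of $b$ can leave $\{n'\}$ only by participating in $n'$, but $a\in P_{n'}$, so firing $n'$ also moves $a$'s token and one returns to case (i) (or, if $n=n'$, re-enters case (ii)). The payoff of clause (ii) is that $b\in P_{n''}$ and $n'\neq n''$, so $n''\notin\vx_1(b)$, i.e.\ $n''$ is \emph{not} enabled at $\vx_1$. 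Consequently, whenever an atom $m$ with outcome $s$ is enabled at $\vx_1$ it is also enabled at $\vx_2$: the only way this could fail is $m=n''$ with $a\in P_{n''}$ (which holds by the guard) and $n''\in\vx_1(a)\setminus\vx_2(a)$, i.e.\ case (ii), which we have just ruled out. Firing $(m,s)$ then leaves the invariant intact, completing the induction; and since in case (ii) the marking $\vx_1$ is never $\vx_f$ (it still holds a token at $b$) and in case (i) the two markings are equal, a common sequence reaches $\vx_f$ in one negotiation iff in the other.

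The main obstacle is the bookkeeping in the inductive step: one must verify that the coupling — especially clause (ii) — is preserved under every small step, and this requires a careful case analysis of how $n$, $n'$, $n''$ interrelate (for instance whether $n=n'$, a self-loop that is admissible since the rule is stated for arbitrary, possibly cyclic, negotiations), of whether $b$ is a party of the atom being fired, and of which outcome is chosen. Once it is checked that case (ii) can only be entered via $(n,r)$ and only exited by a step that moves the token of $a$ or of $b$, the remaining verifications are routine.
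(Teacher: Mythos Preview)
Your proposal is correct and follows essentially the same approach as the paper: both arguments establish that $\mathcal{N}_1$ and $\mathcal{N}_2$ have exactly the same initial occurrence sequences, the key observation being that after $(n,r)$ agent $b$'s readiness is $\{n'\}$, so $n''$ cannot become enabled until $n'$ fires, and firing $n'$ also moves $a$'s token since $a\in P_{n'}$. The paper packages this as a short contradiction argument (if a step of $\mathcal{N}_1$ fails in $\mathcal{N}_2$, it must be an occurrence of $n''$ between $(n,r)$ and the next occurrence of $n'$, which is impossible because $b$ blocks $n''$), whereas you make the same idea explicit via the coupling invariant with clauses (i)/(ii); your version is more detailed---in particular you spell out the corner case $n=n'$ and the passage from coinciding occurrence sequences to coinciding large steps---but the underlying reasoning is the same.
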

\begin{proof}

Let $\N_2$ be the result of applying the rule to $\N_1$. We adopt all notations from Definition \ref{def:redu} and claim that $\N_1$ and $\N_2$ have the same occurrence sequences.
Let $\next_1, \next_2$ be the transition functions of $\N_1$ and $\N_2$, respectively. Since $\next_2(n,a,r) \subseteq \next_1(n,a,r)$ for every $(n,a,r) \in T(N)$, every occurrence sequence of $\N_2$ is also an occurrence sequence of $\N_1$.
It remains to prove that every initial occurrence sequence $\sigma$ of $\N_1$ is also an initial occurrence sequence
of $\N_2$.

Assume the contrary, i.e., $\sigma = \sigma_1 \sigma_2$ such that $\sigma_1$ is an initial occurrence sequence of both $\N_1 $ and $\N_2$ whereas the first step of $\sigma_2$ is only possible in $\N_1$. Since both negotiations only differ w.r.t.\ $\next (n,a,r)$, this step must be an occurrence of agent $n''$
occurring after an occurrence of $(n,r)$ such that no other atom of $\next (n,a,r)$ occurred in between. This holds in particular for $n' \in \next (n,a,r)$.
But, since $\next (n,b,r) = \{n'\}$ and $b$ participates both in $n'$ and in $n''$, $n'$ must occur before $n''$ after $(n,r)$ -- a contradiction.
This concludes the proof showing that $\N$ and $\N'$ have the
same occurrence sequences.

A first immediate consequence is that an atom can occur in $\N_1$ if{}f it can occur in $\N_2$. It remains to show that every occurrence sequence of $\N_1$ can be extended to a large step if{}f the same holds for $\N_2$. To this end, since both negotiations have the same occurrence sequences, we only have to show that an occurrence sequence is a large step of $\N_1$ if{}f it is a large step of $\N_2$. Respective markings in $\N_1$ and $\N_2$ reached by the same occurrence sequence differ only with respect to agent $a$: after the occurrence of $(n,r)$, $\vx (a)$ contains $n'$ and $n''$ in $\N_1$ and $\vx (a)$ contains $n'$ in $\N_2$. Large steps, however, lead to the marking satisfying $\vx (a) = \emptyset$ and are hence identical for both negotiations.

\qed

\end{proof}

\section{(Weakly) Deterministic Negotiations}
\label{sec:determ}

We introduce weakly deterministic and deterministic negotiations.

\begin{definition}
  An agent $a \in \agents$ is {\em deterministic} if for every
    $(n,a,r) \in T(N)$ such that $n \neq n_f$ there exists one atom
    $n'$ such that $\next(n,a,r) = \{n'\}$. 
    
The negotiation $\N$  is {\em weakly deterministic} if for every $(n,a,r) \in T(N)$ there is a deterministic
agent $b$ that is a party of every atom in $\next(n,a,r)$, i.e., $b \in P_{n'}$ for every $n' \in \next(n,a,r)$.
It is {\em deterministic} if all its agents are deterministic.
\end{definition}
Graphically, an agent $a$ is deterministic if no proper hyperarc leaves any port of $a$.
Consider the negotiations of Figure \ref{fig:dneg}. In the acyclic negotiation both
Father and Daughter are deterministic, while Mother is not. In the ping-pong negotiation
only Daughter is deterministic. Both negotiations are weakly deterministic, because
Daughter participates in all atoms, and so can be always chosen as the party $b$ required by the 
definition. 
Observe that the notion of deterministic agent does not refer to the behavior of  atoms,
which is intrinsically nondeterministic with respect to its possible outcomes and even to its state transformations. Rather, it refers to the {\em composition} of negotiations: For each atom $n$, the 
next atom of a deterministic agent is completely determined by the outcome of $n$.

Weakly deterministic negotiations have a natural semantic justification.
Consider a negotiation with two agents $a,b$ and three atoms
$\{n_0,n_1,n_f\}$.  All atoms have the same parties $a,b$ and one
outcome $r$, such that $\next(n_0,a,r) = \{n_1,n_f\} = \next(n_0,b,r)$
and $\next(n_1,a,r) = \{n_f\} = \next(n_1,b,r)$. After the occurrence of
$(n_0, r)$ the parties $a$ and $b$ are ready to engage in both $n_1$ and
$n_f$, and so which of them occurs requires a ``meta-negotiation''
between $a$ and $b$. This meta-negotiation, however, is not part of the model 
and, more importantly, it can be difficult to implement, since it requires 
to break a symmetry. In a weakly deterministic negotiation
this situation cannot happen. If $\next(n,a,r)$ contains more than one atom,
then some deterministic agent $b$ is a part of all atoms in $\next(n,a,r)$.
If some atom of $\next(n,a,r)$ becomes enabled, say $n'$, then 
because agent $b$ is ready to engage in it, and, since $b$ is deterministic, 
$b$ is not ready to engage in any other atom. So $n'$ is the only enabled atom of 
$\next(n,a,r)$, and $n'$ is the negotiation that $a$ will engage in next. This
is very easy to implement: $b$ just sends a message to $a$ telling her that she
should commit to $n'$.

Notice that, for  deterministic negotiations, the second part of the guard of the shortcut rule is always satisfied. Using the notation of the shortcut rule,
this condition requires that the atom $n'$ is the only atom in $\next (\tilde{n},\tilde{a}, \tilde{r})$ for some $(\tilde{n}, \tilde{r})$, provided $n'$ is not only in $\next (n,a,r)$ for some $a \in P_n$ (i.e., provided $n'$ is not removed by the application of the rule). This clearly holds if $\tilde{n}$ is deterministic, as all agents are deterministic in deterministic negotiations.

In the next section we show that, on top of their semantic justification, 
weakly deterministic and deterministic negotiations are also interesting from an analysis point
of view. We prove that 
the shortcut and useless arc rules are complete for acyclic, weakly deterministic negotiations, 
which of course implies that the same rules plus the merge are complete, too. 
A second result  proves that a polynomial number of applications of the merge
and shortcut rules suffices to summarize any sound deterministic
acyclic negotiation (the useless arc rule is irrelevant for deterministic 
negotiations).

\section{Completeness and complexity}
\label{sec:compcomp}

We start with the completeness result for the weakly deterministic case.

\begin{theorem}
\label{thm:shortcomp}
The shortcut and useless arc rules are complete for acyclic, weakly deterministic negotiations. 
\end{theorem}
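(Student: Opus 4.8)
The plan is to show that every \emph{sound} acyclic weakly deterministic negotiation reduces, by means of the shortcut and useless arc rules, to a single atom; soundness is part of the hypothesis (as stated in the abstract) and is in fact unavoidable, since a correct rule can never turn an unsound negotiation into a single atom, which is always sound. I would argue by induction on the number $|N|$ of atoms. For this to make sense I first check that the rules preserve the class: soundness is preserved by the two correctness theorems already proved; acyclicity is preserved because the useless arc rule only deletes an arc, while the shortcut rule, whenever it adds an arc out of $n$, directs it to an atom already reachable from $n$ through $n'$ (and it may delete $n'$); and weak determinism is preserved by a short check, since the only triples whose $\next$-value changes are the fresh $(n,a,r'_f)$ of the shortcut rule, with value $\next(n',a,r')$ or $\next(n,a,r)$, so the deterministic witness for the corresponding old triple still works and stays deterministic because every value newly assigned to it is a singleton (of the form $\next(n',b,r')$ or $\next(n,b,r)$ with $b$ deterministic).

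For the inductive step let $|N|\ge 2$. Soundness and acyclicity make $n_0$ the unique source of the negotiation graph (any other source would be an atom that no agent can ever become ready to engage in, and an arc into $n_0$ would close a cycle since every atom is reachable from $n_0$); hence the subgraph induced on $N\setminus\{n_0\}$ has a source $n'$, all of whose incoming arcs come from $n_0$. I aim to delete $n'$. Because $n_0$ occurs exactly once in every large step and is the only predecessor of $n'$, a party of $n'$ can have $n'$ in its token only in the marking immediately after the occurrence of $n_0$; thus $n'$ is enabled precisely at a marking $\vx$ with $\vx_0\by{(n_0,r)}\vx$ and $n'\in\next(n_0,c,r)$ for \emph{every} $c\in P_{n'}$, and by soundness at least one such $r$ exists. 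For such an $r$ and a $c\in P_{n'}$ with $\next(n_0,c,r)\supsetneq\{n'\}$, weak determinism supplies a deterministic agent $b$ that is a party of every atom in $\next(n_0,c,r)$; since $b\in P_{n'}$ and, by acyclicity and $n_0$ being the sole predecessor of $n'$, $b$ cannot engage in any atom between $(n_0,r)$ and the enabling of $n'$, agent $b$ is still at $\next(n_0,b,r)$ when $n'$ is enabled, whence $\next(n_0,b,r)=\{n'\}$, and the useless arc rule can be used (with $n:=n_0$; see the remark on parasitic arcs below) to strictly shrink $\next(n_0,c,r)$. Iterating, I reach a negotiation in which $(n_0,r)$ unconditionally enables $n'$ for every outcome $r$ of $n_0$ through which $n'$ is reachable; I then apply the shortcut rule to $(n_0,r)$ and $n'$ for each of these outcomes (its second guard condition being vacuous, as $n'$ ends up without incoming arcs), so that after the last one $\next^{-1}(n')=\emptyset$ and $n'$ is deleted. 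The resulting negotiation is in the class and has $|N|-1$ atoms, so the induction hypothesis applies (if $n'=n_f$, which happens for instance when only two atoms remain, $n_0$ takes over the role of final atom).

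I expect the main obstacle to be ``parasitic'' arcs: the syntax permits $\tilde n\in\next(m,c,s)$ with $c\notin P_{\tilde n}$, and it also permits an outcome $s$ of $n_0$ with $n'\in\next(n_0,c,s)$ through which $n'$ is never actually enabled. Such arcs are carried over untouched to the fresh outcomes introduced by the shortcut rule, and the useless arc rule, as stated, cannot remove them (it requires the deleted atom to have the relevant agents among its parties), so they would survive and block the deletion of $n'$. Making the cleanup step go through therefore hinges on proving that soundness together with weak determinism forbids such configurations --- morally, a deterministic agent forced to be a party of all atoms an agent may next engage in ought also to exclude alternatives that are irrelevant for that agent or unreachable --- or else on reorganising the induction so that it can tolerate them. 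This is the crux, and the only point where weak determinism, rather than mere soundness and acyclicity, is really needed.
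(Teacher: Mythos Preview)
Your approach is essentially the paper's: both pick a source $n'$ of $N\setminus\{n_0\}$ (the paper calls it $n_1$, obtained as the second element of a linearisation) and argue that either the shortcut rule or the useless arc rule applies there. The structural difference is that you organise the argument as an induction on $|N|$, committing to delete $n'$ entirely in one phase, whereas the paper uses a cleaner progress-plus-termination scheme: part~(1) shows that if $|N|>1$ then \emph{some} rule is applicable, and part~(2) shows that no infinite sequence of applications exists (each shortcut strictly shortens some large step; between shortcuts only the useless-arc rule fires, which removes arcs). The paper never tracks what happens to $n_1$ after one application---it simply re-invokes part~(1) on whatever negotiation results. This sidesteps exactly the complication you flag as case~(b): the paper only ever has to deal with the \emph{single} outcome $r_0$ that actually enables $n_1$ right after $n_0$, not with every outcome of $n_0$ that has an arc to $n'$.

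Your worry about ``parasitic'' arcs $\tilde n\in\next(m,c,s)$ with $c\notin P_{\tilde n}$ is legitimate, and in fact the paper's own case analysis tacitly assumes $a\in P_{n_2}$ when it invokes the useless arc rule. But this is a formalism-level issue, not a substantive one: such arcs are semantically inert (they never affect enabling) and can be stripped in a preprocessing pass, or one simply adopts the natural convention that $\next(n,a,r)$ contains only atoms with $a$ as a party. Under that convention your case~(b) also dissolves in the paper's framework, since the paper never needs to clear \emph{all} incoming arcs of $n_1$ before making progress. So the ``crux'' you identify is real but shallow; the better fix is to switch to the paper's progress-plus-termination structure rather than push the induction on $|N|$ through every outcome of $n_0$. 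One point in your favour: you explicitly verify that the rules preserve acyclicity and weak determinism, which the paper needs to iterate part~(1) but does not spell out.
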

\begin{proof}
Let ${\cal N}$ be a sound, acyclic, and weakly deterministic negotiation.

The proof has two parts:\\[0.2cm]
\noindent (1) If ${\cal N}$ has more than one atom, then the shortcut rule or the useless arc rule
can be applied to it.

Since ${\cal N}$ is acyclic, its graph generates a partial order on atoms in the obvious way ($n < n'$ if there is a path from $n$ to $n'$).
Clearly $n_0$ is the unique minimal element.
We choose an arbitrary linearisation of this partial order.
Since ${\cal N}$ has more than one atom, this linearisation begins with $n_0$ and has some second element, say $n_1$.
Since ${\cal N}$ is sound, some occurrence sequence begins with an occurrence of $n_0$ and a subsequent occurrence of $n_1$.
So $n_0$ has an outcome $r_0$ such that $n_1 \in \next(n_0,a,r_0)$
for every party $a$ of $n_1$.

Consider two cases:
\begin{itemize}
\item $\{n_1\} = \next(n_0,a,r_0)$ for every party $a$ of $n_1$.\\
Then $(n_0, r_0)$ unconditionally enables $n_1$. Moreover, there are
no $\tilde{n}$, $\tilde{a}$ and $\tilde{r}$ such that $\tilde{n} \neq n_0$ and
$\next (\tilde{n}, \tilde{a}, \tilde{r}) = \{n_1\}$ because otherwise, according to the above defined partial order, $n_0 < \tilde{n} < n_1$ and so $\tilde{n}$ would be between $n_0$ and $n_1$ in every linearisation. So the shortcut rule can be applied.

\item $\{n_1\} \neq \next(n_0,a,r_0)$ for some party $a$ of $n_1$.\\
Then $n_1, n_2 \in \next(n_0,a,r_0)$ for some atom $n_2 \neq n_1$.
Since $\N$ is weakly deterministic, there is a deterministic agent $b$ that is a party
of every atom in $\next(n_0,a,r_0)$, in particular of both $n_1$ and $n_2$. Since $b$
is deterministic and $n_1 \in \next (n_0, b, r)$, we have $\next(n_0,b,r_0) = \{n_1\}$, and so
the useless arc rule is applicable.
\end{itemize}

\noindent (2) The shortcut and useless arc rules cannot be applied infinitely often. \\
Let ${\cal N}_2$ be the result of applying any of the two rules to a
negotiation ${\cal N}_1$. For every large step $\sigma'$ of ${\cal N}_2$, let $\phi'(\sigma')$
be defined similarly to Theorem \ref{thm:shortcound}: if the useless arc rule has been applied, then $\phi'(\sigma')=\sigma'$; if the shortcut rule has been applied
to atoms $n, n'$, then let $\phi'(\sigma')$ be the result of replacing every occurrence of $(n,r_f')$ by
the sequence $(n,r)(n',r')$. We have $|\sigma'| \leq |\phi(\sigma')|$. Moreover,
if the shortcut rule is applied, then $|\sigma'| < |\sigma|$ for at least one large step $\sigma'$,
indeed for all large steps containing $(n',r_f')$.
Since the set of large steps of an acyclic negotiation is finite, and the length of
every large step is not negative, no infinite sequence of
applications of the rules can contain infinitely many applications of the shortcut rule.
So any infinite sequence of applications must, from some point on, only apply the useless arc rule.
But this is also not possible, since the rule reduces the number of arcs.

By (2) every maximal sequence of applications of the rule terminates.
By (1) it terminates with a negotiation containing one single atom.
\qed
\end{proof}

Next we prove that a {\em polynomial number} of applications of the merge and shortcut
rules suffice to summarize any sound {\em  deterministic} acyclic negotiation
(SDAN). For this we have to follow a strategy in the application of the
shortcut rule.

\begin{definition}
The {\em deterministic shortcut rule}, or {\em d-shortcut} rule, is the result of
adding to the guard of the shortcut rule a new condition: (3) $n'$ has at most one
outcome (the actions of the shortcut and d-shortcut rules coincide).
\end{definition}

We say that a SDAN is {\em irreducible} if neither the merge nor the d-shortcut rule 
can be applied to it. In the rest of the section we prove that irreducible SDANs
are necessarily atomic, i.e., consist of a single atom. The proof, which is rather involved, proceeds in three steps. First, we prove a technical
lemma showing that SDANs can be reduced so that all agents participate in every atom with more 
than one outcome. Then we use this result to prove that, loosely speaking, every
SDAN can be reduced to a ``replication'' of a negotiation with only one agent: if 
$\next(n,a,r)=\{n'\}$ for some agent $a$, then $\next(n,a,r)=\{n'\}$ for {\em every} agent $a$.
In the third step, we show that replications can be reduced to atomic negotiations. Finally,
we analyze the number of rule applications needed in each of these three steps, and conclude.

\begin{lemma}
\label{lemma1} 
Let ${\cal N}$ be an irreducible SDAN and 
let $n \neq n_f$ be an atom of ${\cal N}$ with more than one outcome.
Then every agent participates in $n$. 
\end{lemma}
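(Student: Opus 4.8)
The plan is to argue by contradiction: suppose some agent $c$ does \emph{not} participate in $n$, even though $n \neq n_f$ has at least two outcomes. Since ${\cal N}$ is acyclic and sound, every atom lies on some large step, so there is an initial occurrence sequence $\sigma$ in which $n$ occurs, say $\sigma = \sigma_1 (n, r) \sigma_2$ with $\vx_1 \by{\sigma_1} \vx_1' \by{(n,r)} \vx_1'' $. Because $c \notin P_n$, the token of agent $c$ is not moved by the occurrence of $(n,r)$; I want to identify the unique atom $m$ that $c$ is ready to engage in just before $n$ occurs. In a \emph{deterministic} negotiation each agent is deterministic, so $\vx_1'(c) = \{m\}$ for a single atom $m$ (or $m = n_0$). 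The idea is that $m$ is then a candidate target for the d-shortcut rule applied at whatever atom put the token of $c$ on $m$, or conversely, that $n$ itself is unconditionally enabled by the predecessor of one of its parties, giving a d-shortcut; the assumption that ${\cal N}$ is irreducible then yields the contradiction, provided we can also check condition (3) of the d-shortcut guard, i.e.\ that the relevant $n'$ has at most one outcome.

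More concretely, I would proceed as follows. First, pick a party $a \in P_n$ and look at the atom $\hat n$ whose occurrence $(\hat n, \hat r)$ most recently moved the token of $a$ onto $n$ along $\sigma$ (if no such atom exists then $a$'s token came from $n_0$, i.e. $\hat n = n_0$). By determinism, $\next(\hat n, a, \hat r) = \{n\}$. Now consider all parties of $n$ simultaneously: for each $a \in P_n$ there is such an $(\hat n_a, \hat r_a)$ with $\next(\hat n_a, a, \hat r_a) = \{n\}$. If all these coincide — i.e. there is a single atom $\hat n$ and outcome $\hat r$ with $\next(\hat n, a, \hat r) = \{n\}$ for every $a \in P_n$ — and moreover $P_{\hat n} \supseteq P_n$, then $(\hat n, \hat r)$ unconditionally enables $n$, and the d-shortcut rule can fire at $(\hat n, \hat r)$ as long as $n$ has at most one outcome. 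But $n$ has \emph{more} than one outcome, so this particular route is blocked; instead I would shortcut in the \emph{other} direction, using $n$ to eliminate a successor. So the actual contradiction should come from the fact that $n$ unconditionally enables some successor $n'$ with at most one outcome. Here is where the hypothesis ``$c \notin P_n$'' must be used: I claim that if some agent is missing from $n$, then on any continuation the token configuration forces $n$ to unconditionally enable a unique next atom $n'$ (the common next atom of all parties of $n$ under the outcome forced by soundness), and that $n'$, being on the critical path, can be taken with a single outcome after applying the merge rule exhaustively — contradicting irreducibility. The cleanest way is probably: since $c$ does not move when $n$ fires, the marking reached by $(n,r)$ is ``blocked'' on $c$ in a way that, combined with soundness and acyclicity, forces all of $n$'s parties to have the same unique successor atom $n'$ for some outcome; then $(n,r')$ unconditionally enables $n'$, and irreducibility under d-shortcut forces $n'$ to have $\geq 2$ outcomes, which one then propagates to a contradiction by descending the partial order.

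The key steps, in order, are: (i) set up the contradiction hypothesis and fix a large step $\sigma$ through $n$; (ii) use determinism to pin down, for each party of $n$, the unique atom it is ready for immediately before and after $(n,r)$; (iii) use soundness together with the fact that $c$'s token is untouched by $n$ to show that $n$ must unconditionally enable a common successor $n'$ (this is the crux — the missing agent $c$ is what prevents the "meta-negotiation" branching that could otherwise let different parties of $n$ go to different atoms); (iv) exhaustively apply the merge rule (which is always enabled and preserves everything) so that WLOG $n'$ has a single outcome, or else find the contradiction already there; (v) conclude that the d-shortcut rule applies at $(n, r')$, contradicting irreducibility. I expect step (iii) to be the main obstacle: one has to argue carefully, using acyclicity and soundness, that the absence of one agent from $n$ really does collapse the set $\next(n, a, r)$ to a singleton shared by all parties of $n$, rather than just constraining it. I would prove this by considering, for a contradiction-within-the-contradiction, two distinct atoms reachable next from $n$ for two different parties, and showing that soundness would then force $c$ (who is "frozen" relative to $n$) into an inconsistent commitment, or would create a reachable deadlock; acyclicity guarantees this argument terminates by induction down the partial order on atoms.
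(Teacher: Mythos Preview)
Your overall architecture---show that $n$ has an outcome that unconditionally enables some successor $n'$, then descend along such a chain until you hit $n_f$ and use $P_{n_f}=A$---is exactly the skeleton of the paper's proof. The descent step is fine: if $(n,r)$ unconditionally enables $n'$ then $P_{n'}\subseteq P_n$, so a hypothetical missing agent $c$ stays missing all the way down, and reaching $n_f$ gives the contradiction.

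The gap is entirely in your step (iii). You propose to use the missing agent $c$ to force $n$ to unconditionally enable a common successor, arguing that $c$ is ``frozen relative to $n$'' and that this somehow prevents the parties of $n$ from going to different atoms. But $c\notin P_n$ only means that firing $n$ leaves $c$'s token untouched; it places no constraint whatsoever on where the parties of $n$ go next. In a deterministic negotiation there is no ``meta-negotiation branching'' to be prevented in the first place---each party already has a unique successor---so the intuition you cite does not bite. Concretely, nothing stops two parties $a_1,a_2$ of $n$ from having $\next(n,a_1,r)=\{m_1\}$ and $\next(n,a_2,r)=\{m_2\}$ with $m_1\neq m_2$, each $m_i$ having a party outside $P_n$; then $(n,r)$ unconditionally enables nothing, and $c$'s position gives you no leverage.

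The paper establishes the analogue of your step (iii) \emph{without} assuming any agent is missing: it shows that if no outcome of $n$ unconditionally enables any atom, then one can find two outcomes $r_1,r_2$, fire a common continuation $\sigma$ after each, and land in markings enabling disjoint nonempty sets $N_1,N_2$ of atoms; soundness and determinism then force paths back and forth between $N_1$ and $N_2$, producing a cycle and contradicting acyclicity. This cycle argument is the real content of the lemma, and it does not factor through a missing agent. Your step (iv), incidentally, cannot invoke the merge rule at all---$\mathcal N$ is irreducible by hypothesis---so ``apply merge so that WLOG $n'$ has one outcome'' is not available; what you need instead (and what the descent actually gives) is the dichotomy ``$n'$ has one outcome, so d-shortcut applies, contradiction'' versus ``$n'$ has several outcomes, so recurse''.
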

\begin{proof}
We proceed in two steps.\\[0.2cm]
\noindent (a) The atom $n$ has an outcome $r$ such that: either $(n,r)$ unconditionally enables $n_f$, or $(n,r)$ unconditionally enables some atom with more than one outcome.

This is the core of the proof. We first claim: if some outcome $(n,r)$ unconditionally
enables some atom, then (a) holds.
Indeed: if $(n,r)$ unconditionally enables some atom $n'$, then either $n'=n_f$ or $n'$
has more than one outcome, because otherwise the d-shortcut rule can be applied to
$n$ and $n'$, contradicting the irreducibility of $\N$. This proves the claim.

It remains to prove that some outcome $(n,r)$ unconditionally enables some atom. For this,
we assume the contrary, and prove that ${\cal N}$ contains a cycle, contradicting the
hypothesis.

Since the merge rule is not applicable to ${\cal N}$,
$n$ has two outcomes $r_1, r_2$ such that $\next(n,a,r_1) \neq \next(n,a,r_2)$
for some agent $a$. We proceed in three steps.

(a1) For every reachable marking $\vx$ that enables $n$ there is a
sequence $\sigma$ such that $\vx \by{(n,r_1) \, \sigma} \vx_1$ and $\vx \by{(n,r_2) \, \sigma} \vx_2$ for some markings $\vx_1, \vx_2$, and the sets $N_1$ and $N_2$ of
atoms enabled by $\vx_1, \vx_2$ are nonempty and disjoint.\\
Let $\sigma$ be a longest occurrence sequence such that $\vx \by{(n,r_1)\, \sigma} \vx_1$
and $\vx \by{(n,r_2)\, \sigma} \vx_2$ for some markings $\vx_1, \vx_2$ (notice that
$\sigma$ exists, because all occurrence sequences of $\N$ are finite by acyclicity).
We have $N_1 \cap N_2 = \emptyset$, because otherwise we can extend $\sigma$ with the
occurrence of any atom enabled by both markings.
We prove that, furthermore, $N_1 \neq \emptyset \neq N_2$. Assume w.l.o.g. $N_1 = \emptyset$.
Then, since ${\cal N}$ is sound, we have $\vx_1 = \vx_f$, which means that the last
step of $\sigma$ is of the form $(n_f, r_f)$. So $\vx_2$ is also a marking obtained after
the occurrence of $(n_f,r_f)$. Since every agent participates in $n_f$ and $\next(n_f,a,r_f)=\emptyset$ for every
agent $a$ and outcome $r_f$, we also have $\vx_2 = \vx_f$. So
$\vx \by{(n,r_1)} \vx_1' \by{\sigma} \vx_f$ and $\vx \by{(n,r_1)} \vx_2' \by{\sigma} \vx_f$, which implies $\vx_1' = \vx_2'$. Since $\N$ is deterministic, we then have $\next(n,a,r_1) = \next(n,a,r_2)$, contradicting the hypothesis.

(a2) For every $n_1 \in N_1$ there is a path leading from some $n_2 \in N_2$ to $n_1$, and
for every $n_2 \in N_2$ there is a path leading from some $n_1 \in N_1$ to $n_2$.\\
By symmetry it suffices to prove the first part. Since $N_1$ and $N_2$ are disjoint,
$n_1$ is enabled at $\vx_1$ but not at $\vx_2$. Moreover, since $\N$ is acyclic, every
atom can occur at most once in an occurrence sequence, and so neither $n_1$ nor $n_2$ appear
in $\sigma$. Since, furthermore, the sequences $(n,r_1)\, \sigma$ and $(n,r_2)\, \sigma$ only differ in their first element, there is an agent $a$ such that
$\next(n,a,r_1) = \{n_1\}$ and $\next(n,a,r_2)=\{n_2'\}$ for some $n_2' \neq n_1$ ($n_2'$ is not necessarily the $n_2 \in N_2$ we are looking for).
So we have $\vx_1(a) = \{n_1\}$ and $\vx_2(a)= \{n_2'\}$ (see Figure \ref{fig:polyproof}).

\begin{figure}[h]
\centerline{\scalebox{0.45}{\input{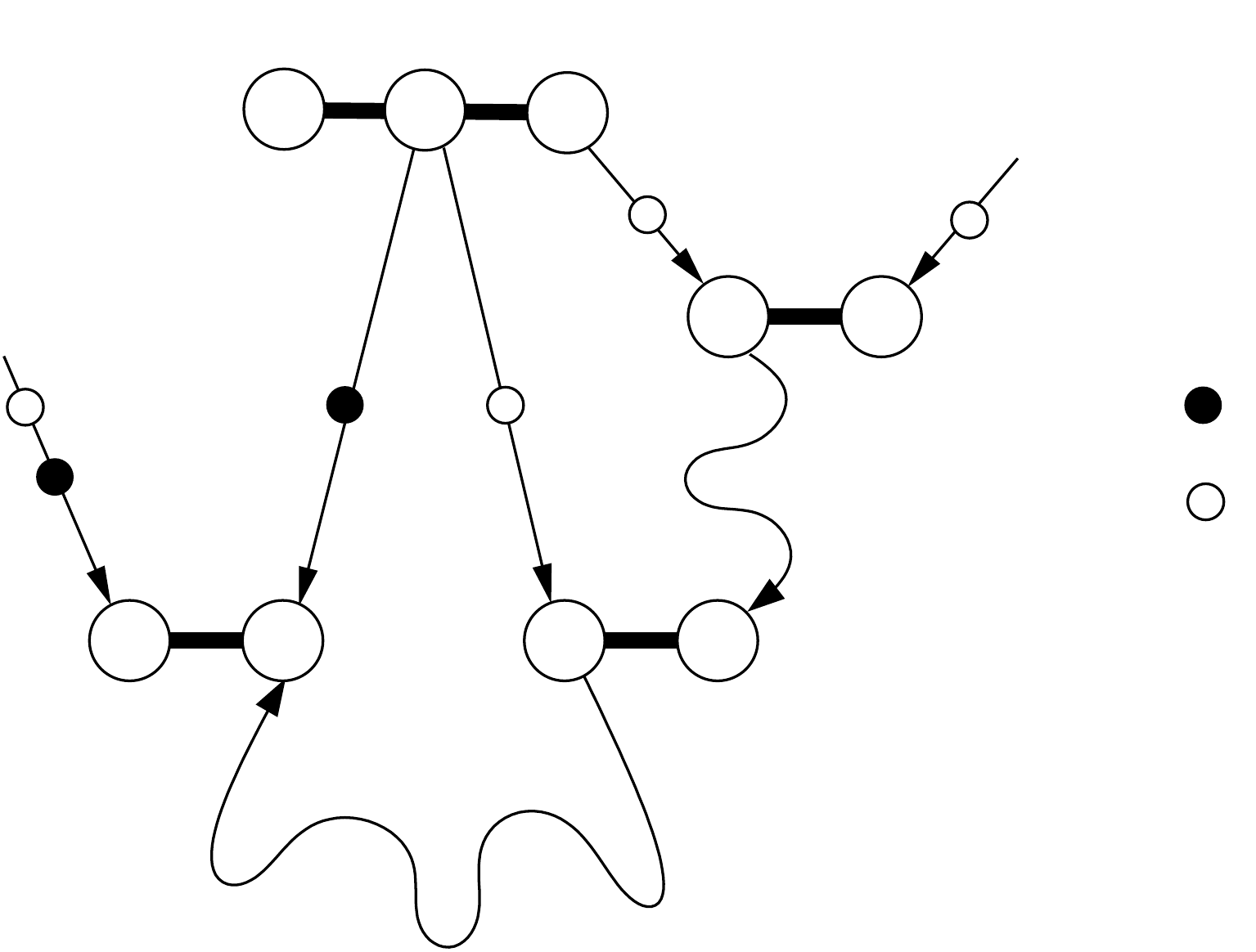_t}}}
\caption{Illustration of the proof of Lemma \ref{lemma1}.}
\label{fig:polyproof}
\end{figure}

We first show that there is a path from $n_2'$ to $n_1$.
By assumption, no outcome of $n$ unconditionally enables any atom, and so
$(n,r_1)$ does not unconditionally enable $n_1$.
So $n_1$ has a party $b \neq a$ such that either $b$ is not a party of $n$ or
$\next(n,b,r_1) \neq n_1$. Since $\vx_1$ enables $n_1$ we have $\vx_1(b)= \{n_1\}$, and since
$b$ is not a party of $n$ or $\next(n,b,r_1) \neq n_1$, we have $\vx_2(b)=\{n_1\}$ as well.
Since $\vx_2(b)=n_1$, and ${\cal N}$ is a SDAN, there is an occurrence sequence
$\tau$ such that $\vx_2 \by{\tau} \vx_2'$ and $\vx_2'$ enables $n_1$ (intuitively,
the white token on the arc to the $b$-port can only leave the arc through the occurrence of $n_1$). Since $\vx_2(a)=\{n_2'\} \neq \{n_1\}$, there is a path from $n_2'$ to $n_1$ (intuitively,
the white token on the arc leading to the $a$-port of $n_2'$ has to travel
to some arc leading to the $a$-port of $n_1$, and by determinism it can only do so
through a path of $a$-ports that crosses $n_2'$).

We now prove that there is a path from some $n_2 \in N_2$ to $n_2'$. If $n_2'$ is enabled at $\vx_2$, then $n_2' \in N_2$ and we are done. If $n_2'$ is not enabled at $\vx_2$ (as in the figure) then, since $\vx_2(a)= \{n_2'\}$ and ${\cal N}$ is a SDAN,
there is a sequence $\tau$ such
that $\vx_2 \by{\tau} \vx_2'$ and $\vx_2'$ enables $n_2'$ (again, by soundness
the white token on the arc to the $a$-port of $n_2'$ can eventually leave the arc to move towards $n_f$, and by determinacy it can only leave the arc through the occurrence of $n_2'$).
Since $N_2$ is the set of transitions enabled at $\vx_2$, we have $\tau = (n_2, r) \, \tau'$ for some $n_2 \in N_2$. So some subword of $\tau$ is a path from some transition of $N_2$ to $n_2'$.

(a3) ${\cal N}$ contains a cycle. \\
Follows immediately from (a2) and the finiteness of $N_1$ and $N_2$.\\

\noindent (b) Every agent participates in $n$.\\
By repeated application of (a) we find a chain $(n_1, r_1) \ldots (n_k, r_k)$ such that
$n_1 = n$, $n_k = n_f$, and $(n_i, r_i)$ unconditionally enables $n_{i+1}$ for every $1 \leq i \leq k-1$. By the definition of unconditionally enabled we have
$P_1 \supseteq P_2 \supseteq \cdots \supseteq P_k = P_f$. Since $P_f = \agents$,
we obtain $P_1 = \agents$.
\end{proof}

\begin{lemma}
\label{lemma2}
Let ${\cal N}$ be an irreducible SDAN. Every agent participates in every atom,
and for every atom $n \neq n_f$ and every outcome $r$ there is an atom $n'$ satisfying 
$\next(n,a,r)=\{n'\}$ for every agent $a$. 
\end{lemma}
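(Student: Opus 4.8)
The plan is to prove the following strengthening, which contains both assertions of the lemma: \emph{every reachable marking of ${\cal N}$ is either $\vx_f$ or places all agents on one common atom} (i.e., there is a single atom $q$ with $\vx(a)=\{q\}$ for every $a\in\agents$), \emph{and whenever a marking of this latter kind with $q\neq n_f$ is reachable one has $P_q=\agents$ and, for each $s\in\outc_q$, a single atom $q_s$ with $\next(q,a,s)=\{q_s\}$ for every $a$.} Granting this, the lemma follows at once: $n_f$ has all agents as parties by definition; for any other atom $n$, soundness gives a reachable marking $\vx\neq\vx_f$ (since $\vx_f$ enables no atom) that enables $n$, so $\vx$ places all agents on one atom $q$, and as $n$ is enabled at $\vx$ while $q$ is the only atom enabled there, $n=q$; hence $P_n=\agents$ and, if $n\neq n_f$, uniformity of $\next(n,\cdot,s)$ for all $s\in\outc_n$.

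Before the induction I would record two consequences of irreducibility. (i) No pair $(p,s)$, with $p$ an atom and $s\in\outc_p$, can unconditionally enable an atom $q\neq n_f$ having at most one outcome: here $q\neq p$ (a self-loop is excluded by acyclicity), and since for deterministic negotiations the second part of the guard of the shortcut rule is automatically satisfied, the d-shortcut rule would be applicable to $p,s,q$, contradicting irreducibility. (ii) Every atom with more than one outcome has all agents as parties, by Lemma~\ref{lemma1}, and so do $n_f$ and $n_0$ by definition. Combining (i) and (ii): any atom that is unconditionally enabled by some pair has all agents as parties.

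I would prove the claim by induction on the length of an occurrence sequence reaching the marking. The base case is $\vx_0$, with $\vx_0(a)=\{n_0\}$ for every $a$. For the step, let $\vx''\by{(n,r)}\vx$ with $\vx''$ reached by a shorter sequence; then $\vx''\neq\vx_f$ (it enables a step), so by induction $\vx''$ places all agents on one atom $q$, which is then the unique atom it enables, whence $n=q$. If $q=n_f$ then $\next(q,a,r)=\emptyset$ for all $a$ and $\vx=\vx_f$. Otherwise the induction hypothesis gives $P_q=\agents$ and a single atom $m$ with $\next(q,a,r)=\{m\}$ for all $a$, so $\vx(a)=\{m\}$ for every $a$, i.e., $\vx$ places all agents on $m$. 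It remains, when $m\neq n_f$, to verify $P_m=\agents$ and uniformity of $\next(m,\cdot,s)$ for each $s\in\outc_m$. Fix such an $s$, append $(m,s)$ to our sequence (legal, since $m$ is enabled at $\vx$) and let $\vx^{+}$ be the marking reached. If $P_m\neq\agents$, a non-party $c$ of $m$ has $\vx^{+}(c)=\{m\}$, and since an acyclic negotiation fires no atom twice, $c$ can never move again, so $\vx_f$ is unreachable from $\vx^{+}$ — contradicting soundness. Hence $P_m=\agents$ and $\vx^{+}(a)=\next(m,a,s)=\{m^{s}_a\}$ for every $a$, by determinism. If the $m^{s}_a$ are not all equal then $\vx^{+}\neq\vx_f$; moreover any atom $q'$ enabled at $\vx^{+}$ satisfies $m^{s}_a=q'$ for all $a\in P_{q'}$, which (using $P_m=\agents\supseteq P_{q'}$) says precisely that $(m,s)$ unconditionally enables $q'$, so by (i)--(ii) $P_{q'}=\agents$ and then all $m^{s}_a=q'$, contradicting that they differ; thus $\vx^{+}$ enables no atom, is a deadlock different from $\vx_f$, and again contradicts soundness. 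So the $m^{s}_a$ all coincide and $\next(m,\cdot,s)$ is uniform, completing the induction.

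The step I expect to be the real obstacle is the ``deadlock'' argument just sketched: one must see that, once the parties of $m$ fail to move to a common successor, the only possible atoms enabled in the next marking are exactly those unconditionally enabled by $(m,s)$, and then combine this with (i) and (ii). This is where irreducibility (via the d-shortcut rule on single-outcome atoms), Lemma~\ref{lemma1} (fullness of multi-outcome atoms), and the remark that the second guard clause of the shortcut rule is automatic for deterministic negotiations are all used; the companion ``stranded agent'' argument relies on the acyclicity fact that no occurrence sequence fires an atom twice. The apparent circular dependence between ``all agents participate in every atom'' and ``$\next$ is uniform'' dissolves because both facts are carried simultaneously through the single induction above.
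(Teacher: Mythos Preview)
Your argument is correct and takes a genuinely different route from the paper's.  Two small remarks: (1) the base case is stated too tersely --- when $n_0\neq n_f$ you must also verify that $\next(n_0,\cdot,s)$ is uniform; the deadlock argument you give later for $m$ applies verbatim with $m=n_0$ and $\vx=\vx_0$, so this is only a writing gap.  (2) In the stranded-agent step, acyclicity is not the operative reason: since $c\notin P_m$ and $\vx^{+}(c)=\{m\}$, no atom in which $c$ participates lies in $\vx^{+}(c)$, so no such atom can ever become enabled and $\vx(c)$ never changes; this already blocks $\vx_f$.

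The paper argues the two halves sequentially rather than by a single induction.  It first shows that every non-final atom has more than one outcome: take an initial occurrence sequence whose last atom $n_k$ is the first single-outcome atom encountered; then $n_0,\ldots,n_{k-1}$ all have more than one outcome, hence full participation by Lemma~\ref{lemma1}, so $(n_{k-1},r_{k-1})$ unconditionally enables $n_k$ and the d-shortcut rule applies --- contradiction.  Lemma~\ref{lemma1} then gives full participation for every atom, after which a short direct soundness argument yields uniformity of $\next$.  Your approach instead carries full participation and uniformity together through an induction on runs, using your observations (i)--(ii) to rule out deadlocks.  The paper's route is a bit shorter and avoids maintaining an inductive invariant; yours is more operational and, as a byproduct, establishes that every reachable marking enables at most one atom.
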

\begin{proof}
We first show that every agent participates in every atom. By Lemma~\ref{lemma1},
it suffices to prove that every atom $n \neq n_f$ has more than one outcome.
Assume the contrary, i.e., some atom different from $n_f$ has only one outcome.
Since, by soundness, every atom can occur, there is an occurrence sequence
$(n_0, r_0) (n_1, r_1) \cdots (n_k, r_k)$ such that $n_k$ has only one outcome
and all of $n_0, \ldots, n_{k-1}$ have more than one outcome. By Lemma \ref{lemma1}, 
all agents participate in all of $n_0, n_1, n_{k-1}$. It follows that $(n_i, r_i)$ 
unconditionally enables $(n_{i+1}, r_{i+1})$ for every $0 \leq i \leq k-1$. 
In particular, $(n_{k-1}, r_{k-1})$ unconditionally enables $(n_k, r_k)$. But 
then, since $n_k$ only has one outcome, the d-shortcut rule can be applied to 
$n_{k-1}, n$, contradicting the hypothesis that ${\cal N}$ is irreducible.

For the second part, assume there is an atom $n \neq n_f$, an outcome $r$ of $n$, and two
distinct agents $a_1, a_2$ such that $\next(n,a_1,r)= \{n_1\} \neq \{n_2\} = \next(n,a_2,r)$. By the first part,
every agent participates in $n$, $n_1$ and $n_2$. Since ${\cal N}$ is sound, some reachable
marking $\vx$ enables $n$. Moreover, since all agents participate in $n$, and ${\cal N}$ is
deterministic, the marking $\vx$ only enables $n$. Let $\vx'$ be the marking given by
$\vx \by{(n,r)} \vx'$. Since $a_1$ participates in all atoms, no atom different from $n_1$ can be enabled
at $\vx'$. Symmetrically, no atom different from $n_2$ can be enabled
at $\vx'$. So $\vx'$ does not enable any atom, contradicting that ${\cal N}$ is sound. \qed
\end{proof}

\begin{theorem}
\label{irredtheo}
Let ${\cal N}$ be an irreducible SDAN. Then ${\cal N}$ contains only one atom. 
\end{theorem}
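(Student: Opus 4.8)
The plan is to prove the statement by contradiction: assume the irreducible SDAN ${\cal N}$ has more than one atom, and exhibit an applicable instance of the merge rule, contradicting irreducibility. Lemmas~\ref{lemma1} and~\ref{lemma2} have already done the structural work, so what remains is short. First I would collect the facts available about ${\cal N}$: by Lemma~\ref{lemma2}, every agent participates in every atom, and for every atom $n \neq n_f$ and every outcome $r \in \outc_n$ there is a single atom $f(n,r)$ with $\next(n,a,r) = \{f(n,r)\}$ for all agents $a$; moreover, the first step of the proof of Lemma~\ref{lemma2} shows that every atom $n \neq n_f$ has at least two outcomes. Hence the graph associated to ${\cal N}$ has, out of each non-final atom $n$, precisely the edges in $\{(n, f(n,r)) \mid r \in \outc_n\}$; each such $n$ has at least one outgoing edge (since $\next(n,a,r) \neq \emptyset$ for $n \neq n_f$), so $n_f$ is the unique sink of the graph. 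By acyclicity of ${\cal N}$, graph-reachability is a strict partial order on $N$.

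Next, since $|N| > 1$, the set $N \setminus \{n_f\}$ is nonempty and finite, so it has an element $\hat n$ that is maximal for this partial order. Every edge leaving $\hat n$ must lead to $n_f$: an edge $(\hat n, m)$ with $m \neq n_f$ would give $\hat n < m$ with $m \in N \setminus \{n_f\}$, contradicting maximality. In other words, $f(\hat n, r) = n_f$ for \emph{every} outcome $r$ of $\hat n$. Since $\hat n \neq n_f$, it has two distinct outcomes $r_1, r_2$, and then $\next(\hat n, a, r_1) = \{n_f\} = \next(\hat n, a, r_2)$ for every agent $a$. This is exactly the guard of the merge rule instantiated at $\hat n$, so the merge rule can be applied to ${\cal N}$ --- contradicting irreducibility. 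Therefore $|N| = 1$.

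I do not expect a serious obstacle in this last step: the difficulty has been front-loaded into Lemmas~\ref{lemma1} and~\ref{lemma2}. The only points needing a little care are that the ``every non-final atom has at least two outcomes'' fact must be read off the \emph{proof} of Lemma~\ref{lemma2} rather than its statement, and that one should explicitly justify that $n_f$ is the unique sink of the associated graph, so that a maximal non-final atom is forced to route all of its outcomes to $n_f$; once that is in place, the guard of the merge rule is met essentially by inspection.
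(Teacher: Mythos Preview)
Your proof is correct and follows essentially the same route as the paper: pick an atom closest to $n_f$ (you use maximality in the reachability order on $N\setminus\{n_f\}$, the paper uses minimality of the longest-path distance to $n_f$), observe that all its outcomes must point to $n_f$, and contradict irreducibility. The only minor difference is that you invoke the ``every non-final atom has at least two outcomes'' fact from the proof of Lemma~\ref{lemma2} to finish with the merge rule alone, whereas the paper's sketch keeps a case split on the number of outcomes and appeals to the d-shortcut rule in the one-outcome case; your version is in fact slightly cleaner.
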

\begin{proof}(Sketch)
Assume ${\cal N}$ contains more than one atom. Fore every atom $n \neq n_f$,
let $l(n)$ be the length of the longest path from $n$ to $n_f$ in the graph of $\N$. Let
$n_{\min}$ be any atom such that $l(n_{\min})$ is minimal, and let $r$ be an arbitrary outcome of $n_{\min}$. By Lemma \ref{lemma2} there is an atom
$n'$ such that $\next(n_{\min},a,r)=\{n'\}$ for every agent $a$. If $n' \neq n_f$ then by
acyclicity we have $l(n') < l(n_{\min})$, contradicting the minimality of $n_{\min}$.
So we have $\next(n_{\min},a,r)=\{n'\}$ for every outcome $r$ of $n_{\min}$ and every agent $a$.
If $n_{\min}$ has more than one outcome, then the merge rule is applicable. If $n_{\min}$ has one outcome, then, since it unconditionally enables $n_f$, the d-shortcut rule is applicable. 
In both cases we get a contradiction to irreducibility. \qed
\end{proof}

\begin{definition}
For every atom $n$ and outcome $r$, let ${\it shoc}(n,r)$ be the length of a shortest maximal 
occurrence sequence containing $(n,r)$ minus $1$, and let ${\it Shoc}({\cal N}) = \sum_{n \in N, r \in R} {\it shoc}(n,r)$. Finally, let
${\it Out}({\cal N}) = \sum_{(P,R,\delta) \in N \setminus \{ n_f\}} |R|$ be the total number of outcomes of ${\cal N}$, excluding those of the final atom.
\end{definition}

Notice that if ${\cal N}$ has $K$ atoms then ${\it shoc}(n,r) \leq K-1$ holds for every atom $n$ and outcome $r$.
Further, if $K=1$ then ${\it Shoc}({\cal N})= 0 = {\it Out}({\cal N})$.

\begin{theorem}
\label{thm:polcomp}
Every SDAN ${\cal N}$ can be completely reduced by means of
${\it Out}({\cal N})$ applications of the merge rule and ${\it Shoc}({\cal N})$ applications of the 
d-shortcut rule.
\end{theorem}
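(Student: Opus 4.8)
The plan is to produce a terminating reduction to a single atom and to bound the number of merge steps and the number of d-shortcut steps separately, each by its own monovariant. First I would check that the two rules keep us inside the class of sound, deterministic, acyclic negotiations: soundness (hence in particular the summary being unchanged) is preserved because the rules are correct; acyclicity is preserved because a d-shortcut on $n,n'$ only creates arcs from $n$ to atoms already reachable from $n$, and a merge does not touch the graph; and determinism is preserved because every fresh outcome $r'_f$ at $n$ inherits its $\next$-image from a singleton (a deterministic agent at a non-final atom) or from the empty set (when $n$ is becoming the new final atom). I would also note that no reduction step merges two outcomes of $n_f$, since that would change the final outcomes. With these facts, Theorem~\ref{irredtheo} tells us that any negotiation to which neither rule applies is a single atom, so it suffices to show that every maximal sequence of merge and d-shortcut applications is finite and contains at most ${\it Out}({\cal N})$ merges and at most ${\it Shoc}({\cal N})$ d-shortcuts.

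For the merge count I would follow ${\it Out}(\cdot)$ along such a sequence. A merge replaces two outcomes of some atom $n\neq n_f$ by a single fresh one, so it decreases ${\it Out}$ by exactly $1$. A d-shortcut does not increase ${\it Out}$, and this is exactly where the extra guard ``$n'$ has at most one outcome'' is used: the rule removes the outcome $r$ of $n$ and adds precisely one fresh outcome $r'_f$ to $n$, for a net change of $0$ at $n$; and if $n'$ is then deleted it contributed at most one outcome to ${\it Out}$ (zero if $n'=n_f$, in which case $n$ becomes the new final atom and ${\it Out}$ only drops more). Since ${\it Out}$ is a nonnegative integer starting at ${\it Out}({\cal N})$, at most ${\it Out}({\cal N})$ merges can occur.

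For the d-shortcut count I would follow ${\it Shoc}(\cdot)$ and show that a merge never increases it while a d-shortcut strictly decreases it by at least $1$. For a merge on $n\neq n_f$ the maximal occurrence sequences of the reduced negotiation are exactly those of the original with $r_1$ and $r_2$ both renamed to $r_f$; hence the two terms ${\it shoc}(n,r_1)$ and ${\it shoc}(n,r_2)$ are replaced by the single term $\min\{{\it shoc}(n,r_1),{\it shoc}(n,r_2)\}$, all other terms are unchanged, and ${\it Shoc}$ drops by $\max\{{\it shoc}(n,r_1),{\it shoc}(n,r_2)\}\geq 1$ (each is $\geq 1$ since $n$, not being final, cannot be the last atom of a large step). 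For a d-shortcut on $n,n'$ I would use that in every large step of the original each occurrence of $(n,r)$ is, after harmlessly reordering the steps between them, immediately followed by $(n',r')$ --- this holds because $(n,r)$ unconditionally enables $n'$ and $n'$ has the unique outcome $r'$ --- so the maximal occurrence sequences of the reduced negotiation are obtained by collapsing each such pair $(n,r)(n',r')$ into one step $(n,r'_f)$. Hence the shortest large step through the new pair $(n,r'_f)$ is exactly one shorter than the shortest one through $(n,r)$ was, and the ${\it shoc}$-term of every atom below $n$ whose shortest witness went through $(n,r)$ also drops by one; the only term that can grow is ${\it shoc}(n',r')$, and only when $n'$ survives (so $n'\neq n_f$) and its shortest witness used the deleted outcome $r$, and one shows this growth is dominated by the simultaneous drop at $n_f$ and at the atoms between $n'$ and $n_f$. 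Summing up, ${\it Shoc}$ strictly decreases at every d-shortcut, so at most ${\it Shoc}({\cal N})$ of them occur; the strict decrease at every step also proves termination, and by Theorem~\ref{irredtheo} the terminal negotiation is a single atom.

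The hard part will be the last step above: the exact accounting of how each ${\it shoc}$-term changes under a d-shortcut when the shortcut atom $n'$ is not deleted. One has to establish that the only term that can grow is ${\it shoc}(n',r')$, measure its growth by the length of the detour forced on the $n'$-route after $r$ is removed from $n$, and cancel it against the unit decrease that propagates from $n$ down to $n_f$ along the newly created shortcut --- and to do this uniformly over the position of $n'$, including the degenerate cases $n=n_0$ and $n'=n_f$. Once that estimate is secured, the two monovariants, termination, and the appeal to Theorem~\ref{irredtheo} finish the argument routinely.
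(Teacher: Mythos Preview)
Your plan coincides with the paper's: track ${\it Out}$ to bound the number of merges and ${\it Shoc}$ to bound the number of d-shortcuts, then invoke Theorem~\ref{irredtheo} once neither rule applies. Your treatment of ${\it Out}$ is correct, and so is your argument that a merge does not increase ${\it Shoc}$.

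The genuine gap is exactly the step you flag as ``the hard part''. Your proposed estimate---that the possible growth of ${\it shoc}(n',r')$ after a d-shortcut is always dominated by drops at $n_f$ and at atoms between $n'$ and $n_f$---is false. Take a one-agent SDAN with atoms $n_0,n,m_1,\dots,m_k,n',n_f$, where $n_0$ has two outcomes $r_0,r_0'$ sending the agent to $n$ and to $m_1$ respectively, the chain $m_1,\dots,m_k$ leads to $n'$, $n$ has the single outcome $r$ leading to $n'$, and $n'$ has the single outcome $r'$ leading to $n_f$. The two large steps have lengths $4$ and $k+3$. Applying the d-shortcut to $n,n'$ (which does \emph{not} delete $n'$, since $m_k$ still points to it) raises ${\it shoc}(n',r')$ from $3$ to $k+2$, while only three terms drop, each by exactly~$1$: ${\it shoc}(n_0,r_0)$, the term for $n$'s outcome, and ${\it shoc}(n_f,r_f)$. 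There are no atoms strictly between $n'$ and $n_f$ to supply further compensation. A direct computation gives ${\it Shoc}({\cal N})=12+(k+1)(k+2)$ and ${\it Shoc}({\cal N}')=6+(k+2)^2$, so ${\it Shoc}$ \emph{increases} by $k-4$ whenever $k\ge 5$. Hence ${\it Shoc}$ is not a monovariant for arbitrary d-shortcuts, and your domination argument cannot be completed. The paper's own proof asserts the same inequality with only the single-term justification ${\it shoc}(n,r'_f)<{\it shoc}(n,r)$ and is silent on the remaining terms of the sum; you are more scrupulous in locating the difficulty, but the resolution you sketch does not go through.

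Since the theorem only claims that \emph{some} reduction sequence meets the bounds, one way forward is to abandon the ``every maximal sequence works'' stance and exhibit a specific strategy---for instance, always apply the d-shortcut to a pair $n,n'$ for which $n'$ will be deleted (as in the proof of Theorem~\ref{thm:shortcomp}, where $n'$ is the second atom in a linear extension)---and argue that ${\it Shoc}$, or a simpler potential, decreases along that particular sequence.
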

\begin{proof}
Let  ${\cal N}$ and ${\cal N}'$ be negotiations such that
${\cal N}'$ is obtained from ${\cal N}$ by means of the merge or the d-shortcut rule. 
For the merge rule we have ${\it Out}({\cal N}') < {\it Out}({\cal N})$ and 
${\it Shoc}({\cal N}') \leq {\it Out}({\cal N})$
because the rule reduces the number of outcomes by one. For the d-shortcut rule we have 
${\it Out}({\cal N}') = {\it Out}({\cal N})$
because if it is applied to pairs $n, n'$ such that $n'$ has one single outcome,
and ${\it Shoc}({\cal N}') <  {\it Shoc}({\cal N})$, because ${\it Shoc}(n,r_f') < {\it Shoc}(n,r)$. \qed
\end{proof}

\section{Conclusions}
\label{sec:conc}

We have introduced negotiations, a formal model of concurrency with
 negotiation atoms as primitive. 
We have defined and studied two important analysis problems: soundness, which coincides 
with the soundness notion for workflow nets, and the new {\em summarization problem}. 
We have provided a complete set of rules for sound acyclic, weakly deterministic negotiations,
and we have shown that the rules allow one to compute a summary of a sound
deterministic negotiations in polynomial time. 

Several open questions deserve further study.
Our results show that  summarization  can be solved in polynomial
time for deterministic, acyclic negotiations, and is co-NP-hard for arbitrary acyclic negotiations. 
The precise complexity of the weak deterministic case is still open. We 
are currently working on a  generalization
of Rule 2.2. of Section \ref{subsec:sumalg}, such that we can completely reduce in polynomial time {\em arbitrary} deterministic negotiations, even if they contain cycles. \\

\noindent
{\bf \em Related work.} 
Previous work on Petri net analysis by means of reductions has already 
been discussed in the Introduction.

A number of papers have modelled specific distributed negotiation protocols with the help of Petri nets or process 
calculi (see  \cite{SalaunFC04,JiTL05,BacarinMMA11}).
However, these papers do not address the issue of negotiation as concurrency primitive. 

The feature of summarizing parts of a negotiation to  single negotiation atoms has several analogies in Petri net theory, among these the concept of zero-safe Petri nets.
By abstracting from reachable markings which mark distinguished ``zero-places'', transactions can be modelled by zero-safe Petri nets \cite{DBLP:conf/apn/BruniM00}. Reference \cite{DBLP:conf/ac/BruniMM03} extends this concept to reconfigurable, dynamic high-level Petri nets.

A  related line of research studies global types and session types to model multi-party sessions \cite{DBLP:conf/popl/HondaYC08}. See  \cite{DBLP:journals/corr/abs-1203-0780} for an overview that also covers choreography-based approaches for web services. This research emphasises communication aspects in the formal setting of mobile processes. Thus, the aim differs from our aim. However, it might be worth trying to combine the two approaches. 

Finally, the graphical representation of negotiations was partly inspired by the BIP component framework \cite{DBLP:journals/software/BasuBBCJNS11,DBLP:journals/tc/BliudzeS08}, where a set of sequential components (i.e., the agents) interact by synchronizing on certain actions (i.e., the atoms).\\

\noindent
{\bf \em Acknowledgement.} 
We thank the reviewers, in particular for their hints to related papers. We 
also thank Stephan Barth, Eike Best, and Jan Kretinsky
for very helpful discussions.

\bibliographystyle{is-abbrv} 
\bibliography{references}

\newcommand{\noopsort}[1]{} \newcommand{\singleletter}[1]{#1}
\begin{thebibliography}{10}
\ifx \showCODEN  \undefined \def \showCODEN #1{CODEN #1}  \fi
\ifx \showISBN   \undefined \def \showISBN  #1{ISBN #1}   \fi
\ifx \showISSN   \undefined \def \showISSN  #1{ISSN #1}   \fi
\ifx \showLCCN   \undefined \def \showLCCN  #1{LCCN #1}   \fi
\ifx \showPRICE  \undefined \def \showPRICE #1{#1}        \fi
\ifx \showURL    \undefined \def \showURL {URL }          \fi
\ifx \path       \undefined \input path.sty               \fi
\ifx \ifshowURL \undefined
     \newif \ifshowURL
     \showURLtrue
\fi

\bibitem{aalst}
W.~M.~P. {\noopsort{Aalst}}van~der Aalst.
\newblock The application of {P}etri nets to workflow management.
\newblock {\em J. Circuits, Syst. and Comput.}, 08\penalty0 (01):\penalty0
  21--66, 1998.

\bibitem{DBLP:journals/fac/AalstHHSVVW11}
W.~M.~P. {\noopsort{Aalst}}van~der Aalst, K.~M. van Hee, A.~H.~M. ter Hofstede,
  N.~Sidorova, H.~M.~W. Verbeek, M.~Voorhoeve, and M.~T. Wynn.
\newblock Soundness of workflow nets: classi\-fication, decidability, and
  analysis.
\newblock {\em Formal Asp. Comput.}, 23\penalty0 (3):\penalty0 333--363, 2011.

\bibitem{BacarinMMA11}
E.~Bacarin, E.~R.~M. Madeira, C.~B. Medeiros, and W.~M.~P. van~der Aalst.
\newblock {\it SpiCa}'s multi-party negotiation protocol: Implementation using
  {YAWL}.
\newblock {\em Int. J. Cooperative Inf. Syst.}, 20\penalty0 (3):\penalty0
  221--259, 2011.

\bibitem{DBLP:journals/software/BasuBBCJNS11}
A.~Basu, S.~Bensalem, M.~Bozga, J.~Combaz, M.~Jaber, T.-H. Nguyen, and
  J.~Sifakis.
\newblock Rigorous component-based system design using the {BIP} framework.
\newblock {\em IEEE Software}, 28\penalty0 (3):\penalty0 41--48, 2011.

\bibitem{DBLP:conf/ac/Berthelot86}
G.~Berthelot.
\newblock Transformations and decompositions of nets.
\newblock In W.~Brauer, W.~Reisig, and G.~Rozenberg, editors, {\em Advances in
  Petri Nets}, volume 254 of {\em LNCS}, pages 359--376. Springer, 1986.

\bibitem{DBLP:journals/tc/BliudzeS08}
S.~Bliudze and J.~Sifakis.
\newblock The algebra of connectors - structuring interaction in {BIP}.
\newblock {\em IEEE Trans. Computers}, 57\penalty0 (10):\penalty0 1315--1330,
  2008.

\bibitem{DBLP:conf/ac/BruniMM03}
R.~Bruni, H.~C. Melgratti, and U.~Montanari.
\newblock Extending the zero-safe approach to coloured, reconfigurable and
  dynamic nets.
\newblock In J.~Desel, W.~Reisig, and G.~Rozenberg, editors, {\em Lectures on
  Concurrency and Petri Nets}, volume 3098 of {\em Lecture Notes in Computer
  Science}, pages 291--327. Springer, 2003.

\bibitem{DBLP:conf/apn/BruniM00}
R.~Bruni and U.~Montanari.
\newblock Executing transactions in zero-safe nets.
\newblock In M.~Nielsen and D.~Simpson, editors, {\em Application and Theory of
  Petri Nets}, pages 83--102, 2000.

\bibitem{DBLP:journals/corr/abs-1203-0780}
G.~Castagna, M.~Dezani-Ciancaglini, and L.~Padovani.
\newblock On global types and multi-party session.
\newblock {\em Logical Methods in Computer Science}, 8\penalty0 (1), 2012.

\bibitem{DBLP:conf/concur/Desel90}
J.~Desel.
\newblock Reduction and design of well-behaved concurrent systems.
\newblock In J.~C.~M. Baeten and J.~W. Klop, editors, {\em CONCUR}, volume 458
  of {\em Lecture Notes in Computer Science}, pages 166--181. Springer, 1990.

\bibitem{Desel:1995:FCP:207572}
J.~Desel and J.~Esparza.
\newblock {\em Free choice Petri nets}.
\newblock Cambridge University Press, New York, NY, USA, 1995.

\bibitem{DBLP:conf/caise/DongenAV05}
B.~F. {\noopsort{Dongen}}van~Dongen, W.~M.~P. van~der Aalst, and H.~M.~W.
  Verbeek.
\newblock Verification of {EPC}s: Using reduction rules and {P}etri nets.
\newblock In O.~Pastor and J.~F. e~Cunha, editors, {\em CAiSE}, volume 3520 of
  {\em LNCS}, pages 372--386. Springer, 2005.

\bibitem{DBLP:conf/ac/Esparza96}
J.~Esparza.
\newblock Decidability and complexity of {P}etri net problems - an
  introduction.
\newblock In {\em Petri Nets}, volume 1491 of {\em LNCS}, pages 374--428.
  Springer, 1996.

\bibitem{Concur2013}
J.~Esparza and J.~Desel.
\newblock On negotiation as concurency primitive, 2013.
\newblock To appear in the Proceedings of CONCUR 2013.

\bibitem{DBLP:journals/tcs/GenrichT84}
H.~J. Genrich and P.~S. Thiagarajan.
\newblock A theory of bipolar synchronization schemes.
\newblock {\em Theor. Comput. Sci.}, 30:\penalty0 241--318, 1984.

\bibitem{DBLP:conf/apn/Haddad88}
S.~Haddad.
\newblock A reduction theory for coloured nets.
\newblock In G.~Rozenberg, editor, {\em Advances in Petri Nets}, volume 424 of
  {\em LNCS}, pages 209--235. Springer, 1988.

\bibitem{DBLP:journals/ppl/HaddadP06}
S.~Haddad and J.-F. Pradat-Peyre.
\newblock New efficient {P}etri nets reductions for parallel programs
  verification.
\newblock {\em Parallel Processing Letters}, 16\penalty0 (1):\penalty0
  101--116, 2006.

\bibitem{DBLP:conf/popl/HondaYC08}
K.~Honda, N.~Yoshida, and M.~Carbone.
\newblock Multiparty asynchronous session types.
\newblock In G.~C. Necula and P.~Wadler, editors, {\em POPL}, pages 273--284.
  ACM, 2008.

\bibitem{HUM}
J.~E. Hopcroft, R.~Motwani, and J.~D. Ullman.
\newblock {\em Introduction to Automata Theory, Languages, and Computation (3rd
  Edition)}.
\newblock Addison-Wesley Longman Publishing Co., Inc., Boston, MA, USA, 2006.

\bibitem{JiTL05}
S.~Ji, Q.~Tian, and Y.~Liang.
\newblock A {P}etri-net-based modeling framework for automated negotiation
  protocols in electronic commerce.
\newblock In D.~Lukose and Z.~Shi, editors, {\em PRIMA}, volume 4078 of {\em
  LNCS}, pages 324--336. Springer, 2005.

\bibitem{PapadimitriouY82}
C.~H. Papadimitriou and M.~Yannakakis.
\newblock The complexity of facets (and some facets of complexity).
\newblock In H.~R. Lewis, B.~B. Simons, W.~A. Burkhard, and L.~H. Landweber,
  editors, {\em STOC}, pages 255--260. ACM, 1982.
\newblock \showISBN{0-89791-070-2}.

\bibitem{SalaunFC04}
G.~Sala{\"u}n, A.~Ferrara, and A.~Chirichiello.
\newblock Negotiation among web services using {LOTOS/CADP}.
\newblock In L.-J. Zhang, editor, {\em ECOWS}, volume 3250 of {\em LNCS}, pages
  198--212. Springer, 2004.

\bibitem{DBLP:journals/jcss/VerbeekWAH10}
H.~M.~W. Verbeek, M.~T. Wynn, W.~M.~P. van~der Aalst, and A.~H.~M. ter
  Hofstede.
\newblock Reduction rules for reset/inhibitor nets.
\newblock {\em J. Comput. Syst. Sci.}, 76\penalty0 (2):\penalty0 125--143,
  2010.

\end{thebibliography}
\end{document}